\newtheorem{theorem}{Theorem}[section]
\newcommand{\ket}[1]{|#1\rangle}
\newcommand{\ii}{\text{i}}
\begin{document}
\title{Fermion-to-Fermion Low-Density Parity-Check Codes}
\author{Chong-Yuan Xu$^{1}$}
\thanks{These authors contribute equally to this work.}
\author{Ze-Chuan Liu$^{1}$}
\thanks{These authors contribute equally to this work.}
\author{Yong Xu$^{1,2}$}
\email{yongxuphy@tsinghua.edu.cn}
\affiliation{$^{1}$Center for Quantum Information, IIIS, Tsinghua University, Beijing 100084, People's Republic of China}
\affiliation{$^{2}$Hefei National Laboratory, Hefei 230088, People's Republic of China}

\begin{abstract}
Simulating fermionic systems on qubit-based quantum computers often demands significant computational resources due to the requirement to map fermions to qubits.
Thus, designing a fault-tolerant quantum computer that operates directly with fermions offers an effective solution to this challenge.
Here, we introduce a protocol for fault-tolerant fermionic quantum computation utilizing fermion-to-fermion low-density parity-check (LDPC) codes.
Our method employs a fermionic LDPC memory, which transfers its state to fermionic color code processors, where logical operations are subsequently performed.
We propose using odd-weight logical Majorana operators to form the code space, serving as memory for the fermionic LDPC code, and provide an algorithm to identify these logical operators.
We present examples showing that the encoding rate of fermionic codes often matches that of qubit codes,
while the logical failure rate can be significantly lower than the physical error rate.
Furthermore, we propose two methods for performing fermionic lattice surgery to facilitate state transfer.
Finally, we simulate the dynamics of a fermionic system using our protocol, illustrating effective error suppression.
\end{abstract}
\maketitle
\maketitle

The study of strongly correlated fermionic systems is central to high-energy physics~\cite{bauer2023quantum}, material science~\cite{bauer2020quantum}, and quantum chemistry~\cite{mcardle2020quantum}, promising insights into
phenomena ranging from quark dynamics~\cite{Berges2021rmp} to high-temperature superconductivity~\cite{varma2020colloquium}.
However, solving these problems using classical computers is often very challenging.
For instance, quantum Monte Carlo methods usually face the sign problem when addressing fermionic problems~\cite{Troyer2005prl}.
In this context, quantum computers offer a promising alternative~\cite{Feynman1982ijtp}.
However, since conventional quantum computers use qubits, solving fermionic problems requires mapping fermionic operators to qubit operators, which often incurs substantial overhead~\cite{Abrams1997prl,Ortiz2001pra,Bravyi2002annp,Ball2005prl,verstraete2005mapping,Whitfield2016pra}, making experimental implementation significantly challenging in the near term.
To address this challenge, developing programmable fermionic quantum processors is increasingly appealing~\cite{Bravyi2002annp,gonzalez2023fermionic}.
In light of the unavoidable presence of noise, fault-tolerant fermionic quantum computing based on fermion-to-fermion repetition and color codes have been proposed very recently~\cite{Schuckert2024arxiv,ott2024error}.
However, the protocol's overhead remains substantial because the encoding rate is low, with each block encoding only a single logical fermion.

Recently, protocols based on quantum LDPC codes have been proposed to reduce the overhead in qubit-based fault-tolerant quantum computation~\cite{tillich2013quantum,Gottesman2013,kovalev2013quantum,breuckmann2016constructions,panteleev2021degenerate,breuckmann2021balanced,krishna2021fault,higgott2021subsystem,breuckmann2021quantum,panteleev2021quantum,panteleev2021quantum,delfosse2021bounds,baspin2022connectivity,baspin2022quantifying,tremblay2022constant,cohen2022low,panteleev2022asymptotically,leverrier2022quantum,strikis2023quantum,quintavalle2023partitioning,lin2024quantum,xu2024constant,bravyi2024high,zhang2025time,li2025low}.
Compared to the paradigmatic surface code, quantum LDPC codes feature a significantly higher encoding rate.
Although these codes require long-range connectivity between qubits, recent technological advances in platforms~\cite{bluvstein2022quantum,bluvstein2024logical,bravyi2024high,chen2024benchmarking,wang2026demonstration} such as Rydberg atom arrays, superconducting qubits, and trapped ions have made the near-term implementation of these codes promising.
Despite these significant advancements, existing studies primarily focus on constructing fault-tolerant quantum computers with qubits.
It remains unclear how to construct a fermion-to-fermion LDPC code capable of encoding multiple fermion modes in a single code block and how to perform
logical operations on such codes.

\begin{figure}
\includegraphics[width=1\linewidth]{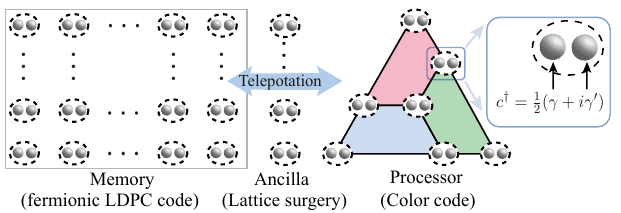}
\caption{
Schematic illustration of fault-tolerant fermionic quantum computation framework.
The architecture comprises three components: fermionic memory implemented using fermionic LDPC codes, fermionic processor implemented using fermionic
color codes to execute logical operations, and fermionic interface composed of ancilla fermions serving as a critical bridge for communication between memory and processor.}
\label{Architecture}
\end{figure}

Here, we introduce a protocol for fault-tolerant fermionic quantum computation based on fermion-to-fermion LDPC codes.
Our approach employs a fermionic LDPC memory integrated with fermionic color code processors as illustrated in Fig.~\ref{Architecture}, which is generalized from the quantum LDPC case~\cite{xu2024constant}.
Logical fermionic information is initially stored in the fermionic LDPC memory and, when needed, is transferred to fermionic color code processors, where fault-tolerant logical operations are conducted.
Upon completion, the information is returned to the memory.
To construct the fermionic LDPC memory, we develop a systematic workflow for creating fermionic stabilizer codes based on weakly-self-dual Calderbank-Shor-Steane (CSS) codes.
These codes are characterized by logical Majorana operators consisting of an odd number of physical Majorana operators.
We construct four distinct classes of fermionic LDPC codes using four different weakly-self-dual LDPC codes: bicycle codes~\cite{MacKay2004ieee},
finite Euclidean geometry codes~\cite{cao2012novel,aly2008class}, 
finite projective geometry codes~\cite{farinholt2012quantum} and stacked LDPC codes~\cite{liu2026stacked}.
These codes can have a significantly higher encoding rate compared to the fermionic color code.
Furthermore, we propose two methods for performing fermionic lattice surgery to facilitate transfer of logical fermionic states between the fermionic memory and processor.
We show that this process maintains the code distance.
Finally, we simulate the dynamics of a fermionic system, demonstrating effective error suppression.

We start by introducing the construction of our fermionic LDPC codes from $2n$ physical Majorana fermion operators $\left\{{\gamma}_1, {\gamma}_1^{\prime},{\gamma}_2, {\gamma}_2^{\prime}, \dots,{\gamma}_n, {\gamma}_n^{\prime}\right\}$ arising from $n$ physical complex fermions' creation and annihilation operators~\cite{kitaev2006anyons}.
The strings of these Majorana operators, together with a phase factor $\eta\in\left\{\pm1,\pm\ii\right\}$,
generate the group of Majorana operators $\text{Maj}(2n)\equiv\left\{{\Gamma}=\eta\prod_{j=1}^{n} {\gamma}_j^{\alpha_j} ({\gamma}_j^\prime)^{\alpha_j^\prime}|\alpha_j,\alpha_j^\prime \in \{0,1\} \right\}$~\cite{Bravyi2010njp}.
The number of Majorana operators in ${\Gamma}$ is referred to as its weight.
The Majorana stabilizer code is defined by a Majorana stabilizer group $\mathcal{S}_{\text{maj}}$, a subgroup of $\text{Maj}(2n)$, where all elements are Hermitian, mutually commute, ensuring it is an Abelian subgroup, have even weight to preserve the parity of a physical fermion system, and $-I$ with $I$ being the identity element in the group is excluded~\cite{Bravyi2010njp}.
The logical operators are generated by an independent subset of $\text{Maj}(2n)$ comprising Majorana operator strings that commute with all elements of the stabilizer group $\mathcal{S}_{\text{maj}}$ but are not members of $\mathcal{S}_{\text{maj}}$~\cite{nielsen2010quantum}.

Each Majorana string operator can be represented by a binary vector $(\alpha_1,\alpha_1^\prime,\dots,\alpha_n, \alpha_n^\prime)$, where
$\alpha_j,\alpha_j^\prime\in \{0,1\}$ and $1\le j \le n$.
The binary vectors corresponding to the $m$ independent stabilizer generators of $\mathcal{S}_{\text{maj}}$ form an $m \times 2n$ check matrix.
For a fermion-to-fermion LDPC code, we construct a check matrix $H$ based on a weakly-self-dual CSS code with the binary check matrices that satisfy $H_X=H_Z=A$ and $AA^T=0$, as follows,
\begin{equation}\label{eq:CheckMatrix}
H=\left(\begin{matrix}
H_{\gamma} & 0\\
0 & H_{\gamma^{\prime}}
\end{matrix}\right),
\end{equation}
where $H_{\gamma}=H_{\gamma^{\prime}}=A$. $H_{\gamma}$ and $H_{\gamma^{\prime}}$ correspond to check matrices that characterize the stabilizers consisting exclusively of $\{\gamma_j\}$ and $\{\gamma_j^{\prime}\}$ operators, respectively, as described for fermion-to-fermion color codes in Ref.~\cite{Schuckert2024arxiv}.
The logical operators are represented by vectors in the kernel of $H$ that 
cannot be generated by stabilizers, whose span forms a homology group $\ker(A)/\text{im}(A^T)$ 
for both $\gamma$ and $\gamma^\prime$ types of logical operators~\cite{bombin2007homological}.

In our protocol, there are multiple logical fermion modes within a single fermionic LDPC memory, and these logical fermion modes are transferred to processors for the execution of logical operations.
It is essential that the logical Majorana operator in memory anticommutes with that in the processor to adhere to fermionic statistics.
We thus require all logical Majorana operators in memory have odd weight and even overlap with each other~\cite{Bravyi2010njp}.
Such conditions also ensure that the fermionic statistics are maintained when concatenating different fermion code blocks~\cite{Schuckert2024arxiv}.
We denote the logical Majorana operators as $\overline{\gamma}_j$ 
and $\overline{\gamma}_j^\prime$ with $1 \le j \le k $ ($k$ is the 
number of logical complex fermion modes), which consist of physical Majorana 
operators $\{\gamma_1,\dots,\gamma_n \}$ and $\{\gamma_1^\prime,\dots,\gamma_n^\prime \}$, respectively.
The logical complex fermion creation operators are then defined as $\overline{c}_j^\dagger=\frac{1}{2}(\overline{\gamma}_j+\ii \overline{\gamma}_j^{\prime})$~\cite{kitaev2006anyons}.

\begin{figure}[t]
\centering
\includegraphics[width=\linewidth]{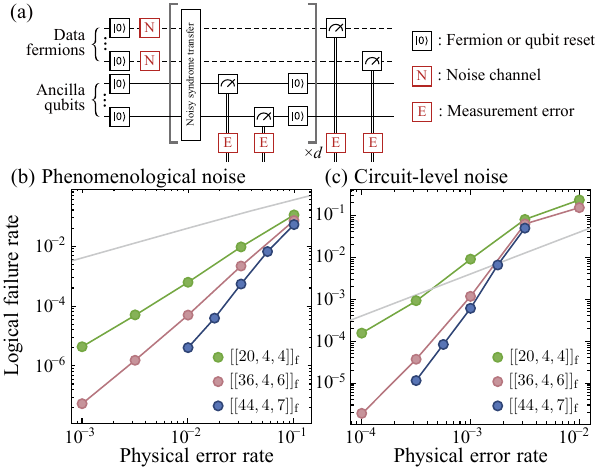}
\caption{
	(a) Illustration of the circuit-level noise model where 
	the syndrome extraction transfer circuits used to establish entanglement 
	between data fermions and ancilla qubits are noisy.
	The initial states are physical $|0\rangle$ states for fermions.
	In the end, a transversal measurement on all 
	physical fermion modes in the particle number basis is performed.
	Logical failure rate $p_L$ versus physical error rate $p$ under (b) phenomenological 
	and (c) circuit-level noise model for fermionic $[[20,4,4]]_{\text{f}}$, 
	$[[36,4,6]]_{\text{f}}$ and $[[44,4,7]]_{\text{f}}$ double-chain bicycle codes.
	We also perform a power law fit to the data according to 
	the fitting formula $p_L\sim p^\alpha$ with $\alpha=2.21, 3.06, 4.3$ in (b) 
	and $\alpha=2.12, 3.46, 4.54$ in (c), and
	the gray lines represent the probability that at least an error occurs on 
	four physical fermion modes.
	Here, the Tesseract decoder~\cite{beni2025tesseractdecoder} is employed, and
	the results obtained using the belief propagation and ordered-statistical 
	decoding (BP+OSD)~\cite{liang2025low,roffe_decoding_2020}
	algorithm are presented in Supplemental Material.
}
\label{MemorySimulation}
\end{figure}

\begin{figure}[t]
\centering
\includegraphics[width=\linewidth]{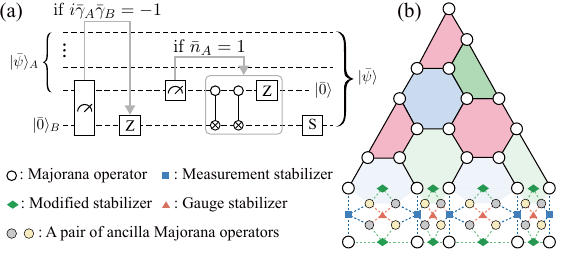}
\caption{
	(a) Quantum circuit to transfer a logical fermionic state $\ket{\overline{\psi}}_A$ 
	to code block $B$ (initialized as $\ket{\overline{0}}_B$). The protocol involves: 
	(i) measuring $\ii\overline{\gamma}_A\overline{\gamma}_B$ and applying the $Z$ 
	gate of $\exp(\ii\pi\overline{n}_B)$, provided the measurement result is $-1$; 
	(ii) measuring the logical particle number $\overline{n}_A$ and 
	applying {two braiding gates 
	and one fermionic $Z$ gate},
	if the measurement outcome is $1$; 
	(iii) applying a phase rotation $S$ gate to the state in the processor.
	The validation of this circuit is provided in Supplemental Material.
	(b) Illustration of the first method for lattice surgery achieving fault-tolerant measurements of $\ii\overline{\gamma}_A \overline{\gamma}_B$ in the case with $|\overline{\gamma}_A|=|\overline{\gamma}_B|$.
	In this example, code $B$ is a $d=5$ fermionic color code, and code $A$ is a fermionic LDPC code containing five Majorana operators in the support of $\overline{\gamma}_A$.
	The empty circles represent the $\gamma$-type Majorana operators in blocks $A$ and $B$.
	Several pairs of ancilla Majorana operators are introduced with each horizontal pair represented by yellow and gray circles to denote $\gamma$-type and $\gamma^\prime$-type Majorana operator forming a complex fermion.
	The $\gamma$-type stabilizer generators at the boundary, connected to ancilla fermions, are modified to incorporate the ancilla Majorana operators (green rhombuses).
	Measurement and gauge stabilizer generators are depicted as blue squares and red triangles, respectively.
	The product of these stabilizer generators yields the joint logical operator $\ii\overline{\gamma}_A \overline{\gamma}_B$. The $\gamma^\prime$-type Majorana stabilizers remain unchanged.
}
\label{LatticeSurgery}
\end{figure}

Previous studies have primarily focused on encoding Majorana modes into qubits~\cite{Bravyi2010njp,litinski2018quantum,vijay2015majorana}.
While some results also demonstrate the existence of odd-weight logical operators in Majorana surface codes~\cite{mclauchlan2024new}, there is a lack of methods for systematically identifying these operators within a fermionic LDPC code.
To identify them, we generalize the Gram-Schmidt orthogonalization process to vector spaces over $\mathbb{F}_2$~\cite{prasetiaextended}.
Let $\left\{\vec{\gamma}_j|j=1,\dots,k^\prime, k^\prime \ge k \right\}$ be a basis for $\ker{A}/\text{im}(A^T)$, which is itself also a vector space; $\vec{\gamma}_j$ represents the corresponding logical $\overline{\gamma}_j$ or $\overline{\gamma}^\prime_j$ operator, expressed as $\gamma_1^{[\vec{\gamma}_j]_1}\dots \gamma_n^{[\vec{\gamma}_j]_n}$ or $(\gamma_1^\prime)^{[\vec{\gamma}_j]_1}\dots (\gamma_n^\prime)^{[\vec{\gamma}_j]_n}$.
If there exists a basis vector with odd number of $1'$s, say $\vec{\gamma}_1$, we remove it from the basis and make the remaining basis vectors orthogonal to it via $\vec{\gamma}_j\to \vec{\gamma}_j-(\vec{\gamma}_1\cdot \vec{\gamma}_j)\vec{\gamma}_1$, i.e., $\vec{\gamma}_1$ has even overlap with all others.
This procedure is repeated until no odd-weight basis vector remains.
The obtained odd-weight vectors correspond to logical operators which have odd weight, and the overlap weight between any two of them is even, thus satisfying the fermionic anticommutation relation.
The necessary and sufficient condition for the existence of at least one odd-weight logical is $(1,1,\dots,1)^T\notin \text{im}(A^T)$~\cite{Bravyi2010njp}.
We find that for all cases we consider, the procedure produces a linearly independent set containing $k^\prime$ odd-weight binary vectors (see Supplemental Material).
%The resulting odd-weight logical operators define a fermionic subspace code for our fermionic LDPC code.

Based on four different LDPC codes, which include finite projective geometry codes~\cite{farinholt2012quantum},
finite Euclidean geometry codes~\cite{cao2012novel,aly2008class}, bicycle 
codes~\cite{MacKay2004ieee}, and stacked LDPC codes~\cite{liu2026stacked}, 
we construct four classes of fermionic LDPC codes, which 
yield the same code parameters 
$[[n,k,d]]_{\text{f}}$ as the original qubit codes (see Supplemental Material).
We now focus on three fermionic double-chain bicycle codes, which belong
to the fermionic stacked LDPC codes, and evaluate their logical 
information resilience as a fermion memory through numerical simulations.
Specifically, we initialize all physical fermion modes in the vacuum state, $|0\rangle$.
The performance of a code depends on quantum error models employed. For completeness,
we consider three types of error models, analogous to the qubit case: code-capacity,
phenomenological and circuit-level error models (see Fig.~\ref{MemorySimulation}(a))~\cite{vasic2025quantum}.
The code-capacity error model does not account for errors in 
syndrome measurements and thus represents the upper bound 
of code performance against the errors considered.
In the phenomenological and circuit-level error models, 
the syndrome measurement circuits are noisy: In the former, only 
the measurement outcomes are noisy, whereas in the latter, all operations in 
the circuit are also noisy (see Supplemental Material).
Under these two error models, we perform $N_c=d$ rounds of 
syndrome measurements, followed by
a transversal measurement on all physical fermion modes in the particle number basis, 
as illustrated in Fig.~\ref{MemorySimulation}(a).
We then apply a decoder to interpret the obtained syndromes 
and verify whether logical errors occur.

In numerical experiments, we perform $N_{\mathrm{sample}}$ trials and
identify $N_{\mathrm{error}}$ erroneous occurrences.  
Consequently, the logical error probability is given by 
$P_{\mathrm{L}}(N_c)=N_{\mathrm{error}}/N_{\mathrm{sample}}$.
Conventionally, the logical failure rate is defined as
$p_{\mathrm{L}}=1-(1-P_{\mathrm{L}}(N_c))^{1/N_c}\approx P_L/N_c$~\cite{xu2024constant,bravyi2024high},
which approximately characterizes the logical error rate per syndrome cycle.

Figure~\ref{MemorySimulation}(c) and (d) show the logical failure rate 
with respect to the physical error rate $p$ for three fermionic double-chain bicycle codes.
We clearly see that the logical failure rate is significantly suppressed compared to 
the physical one according to $p_{\mathrm{L}} \sim p^\alpha$ with $\alpha \approx \lceil \frac{d}{2} \rceil$.
By convention~\cite{bravyi2024high}, we define the code's pseudo-threshold 
as the solution to the break-even equation $p_{\mathrm{L}}(p) = P(p, k)$, 
where $P(p, k)=1-(1-p)^k$ is the probability of at least one error occurring on $k$ physical 
fermion modes at physical error rate $p$.
Under the circuit-level error model, the calculated pseudo-thresholds of the $[[36,4,6]]_{\text{f}}$ and 
$[[44,4,7]]_{\text{f}}$ codes exceed $0.1\%$.

In Supplemental Material,
we also provide the numerical simulation results for fermionic finite projective geometry codes,
fermionic finite Euclidean geometry codes, fermionic bicycle codes, fermionic
double-layer bivariate bicycle (BB) codes, fermionic double-layer twisted BB codes,
and fermionic double-layer reflection codes.
In addition, we study several other self-dual codes such as Kitaev Majorana 
code~\cite{kitaev2006anyons} and unicycle codes~\cite{MacKay2004ieee}
and find that they are not suitable for producing well-behaved fermionic LDPC 
because they do not support the generation of odd-weight logical operators.

We now study how to perform logical operations on the fermionic LDPC memory.
Instead of executing logical gates directly on the fermionic LDPC, we consider a method whereby the states of a target logical fermion mode in memory are transferred to an external fermionic color code processor.
Logical operations are then performed on these processors, and the states are subsequently transferred back to the memory.
This approach is inspired by fault-tolerant measurements of logical operators in qubit LDPC codes~\cite{horsman2012surface,fowler2018low,xu2024constant,cohen2022low,poulsen2017fault}.
If the logical operation involves two logical fermionic modes, we consider two color code processors.
The logical information of the two fermionic modes is individually transferred, logical operations are applied between these two color code blocks, and the information is subsequently returned to memory.
Logical operations involving more logical fermionic modes can be performed similarly.
For fermionic state transfers between memory and processor, we design a measurement-based circuit as illustrated in Fig.~\ref{LatticeSurgery}(a).
This circuit enables the transfer of a single logical fermion information described by logical Majorana operators $\overline{\gamma}_A$ and $\overline{\gamma}_A^\prime$ to a processor described by logical Majorana operators $\overline{\gamma}_B$ and $\overline{\gamma}_B^\prime$, which is initialized in the logical vacuum state $\ket{\overline{0}}_B$.
In other words, an output state is the same as the initial state except that the processor logical fermion plays the role of the corresponding logical fermion in memory.
During the transfer, a key step is the fault-tolerant measurement of the joint logical operator $\ii\overline{\gamma}_A \overline{\gamma}_B$, as shown in Fig.~\ref{LatticeSurgery}(a).

\begin{figure}[t]
\centering
\includegraphics[width=\linewidth]{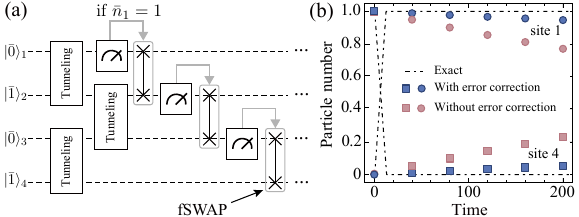}
\caption{
	(a) Quantum circuit used to benchmark our protocol.
	We first initialize the system in the logical state $\left|\overline{0}\overline{1}\overline{0}\overline{1}\right\rangle$ 
	and then apply three tunneling gates with each gate executing the operation ${T}_{j,j+1}= \exp(\ii \pi (\overline{c}_j^\dagger \overline{c}_{j+1} + \text{H.c.})/2)$.
	Onsite measurements of $\overline{n}_j=\overline{c}_j^\dagger \overline{c}_j$ with $j=0,1,2$ are then performed, followed by a fermionic swap (fSWAP) gate if the measurement result is zero.
	The unitary gates and feedback process constitute a module, which is executed consecutively multiple times.
	(b) The expectation value of logical particle numbers at site $1$ and $4$ 
	with respect to time which is characterized by the number of executed modules.
	We see that the simulation under circuit-level noise with 
	error correction (filled blue circles and squares) is close to the noiseless simulation result 
	(the dotted-dashed lines), in stark contrast to the simulation results under noise 
	without error correction (filled red circles and squares). 
	In the simulation, we apply the first method for lattice surgery. Results using the 
	second method are provided in the Supplemental Material.
	Here, we set the physical error rate $p=10^{-4}$.
	The error bars are hidden behind the symbols.
}
\label{TaskSimulation}
\end{figure}

We propose two methods for fermionic lattice surgery to realize the fault-tolerant joint measurement, 
as shown in Fig.~\ref{LatticeSurgery}(b), Fig.~\ref{figs-stabilizers}, and Fig.~\ref{figs-general-LS}.
The first method applies when code $B$ is a fermionic color code 
with a distance $d_B$ such that $d_B=d_A$, where 
$d_A=|\overline{\gamma}_A|=|\overline{\gamma}_A^\prime|$ with $|\dots|$ 
denoting the weight of a Majorana operator string.
The second method is applicable when either $d_B=d_A$ or 
$d_B \neq d_A$, and when code $B$ is either a fermionic color code
or a fermionic LDPC code.
The procedure proceeds by introducing ancilla Majorana operators between 
the two codes, modifying the stabilizers associated with $\overline{\gamma}_A$ 
and $\overline{\gamma}_B$, and adding new stabilizers that merge $A$ 
and $B$ into a larger fermionic LDPC code $\mathcal{C}_{\text{merged}}$. 
Since the joint operator $\ii\overline{\gamma}_A\overline{\gamma}_B$ 
belongs to the stabilizer group $\mathcal{S}_{\text{merged}}$ of $\mathcal{C}_{\text{merged}}$,
it can be measured 
fault-tolerantly by collecting the outcomes of the stabilizer measurements 
in $\mathcal{C}_{\text{merged}}$.
Figure~\ref{LatticeSurgery} provides an example of fermionic lattice surgery
using method 1. In this example, we align the support of the logical 
operator $\overline{\gamma}_A$ and $\overline{\gamma}_B$ and 
introduce a set $Q_C$ of $4(d_A-1)$ ancilla Majorana operators 
(corresponding to $2(d_A-1)$ complex fermion modes) between these two codes, 
labeled as $\gamma_{a,1},\gamma_{a,1}^\prime,\dots,\gamma_{a,(d_A-1)},\gamma_{a,(d_A-1)}^\prime$ and $\gamma_{b,1},\gamma_{b,1}^\prime,\dots,\gamma_{b,(d_A-1)},\gamma_{b,(d_A-1)}^\prime$.
The original $\gamma$-type stabilizer generators at the boundary are modified 
by including two ancilla Majorana operators.
Measurement stabilizer generators $M_1,M_2,\dots,M_{d_A}$ and gauge 
stabilizer generators $G_1,G_2,\dots,G_{d_A-1}$ are also introduced.
The original $\gamma^\prime$-type stabilizer generators for both block 
of $A$ and $B$ are not modified.
This merging procedure creates a merged stabilizer code $\mathcal{C}_{\mathrm{merged}} $ 
where $\ii\overline{\gamma}_A \overline{\gamma}_B$ becomes a stabilizer.
In End Matter, we provide guidelines for introducing ancilla Majorana 
operators and constructing stabilizer generators for both methods.
In Supplemental Material, we show that the distance of dressed logical 
operators is not decreased based on the formalism of the 
subsystem code~\cite{vuillot2019code,aly2006subsystem}.
We also show that the number of the Majorana modes 
introduced during the whole process is small compared to the fermionic LDPC 
memory, resulting in a small resource overhead.

To achieve the fault-tolerant measurement of $\ii\overline{\gamma}_A\overline{\gamma}_B$, we first initialize the physical ancilla complex fermions in the $|0\rangle $ state and then measure all the stabilizer generators in the merged code and perform $\min\{d_A,d_B\}$ rounds of error corrections in the presence of noisy measurements to ensure fault-tolerance.
Finally, each physical ancilla fermion is measured in the particle number basis to return the state to the original code space.

We now demonstrate the capability of our codes by simulating the 
quantum circuit of four fermion modes shown in Fig. \ref{TaskSimulation}(a).
The circuit incorporates tunneling operations that enable the tunneling of fermions between neighboring sites, and onsite measurements, followed by fermionic swap gates, provided the measurement outcome is zero.
For an initial state $\ket{\overline{0}\overline{1}\overline{0}\overline{1}}$, we expect that the steady state of this dynamical process is a skin state
where fermionic particles mainly reside in the upper half part due to feedback effects~\cite{wang2024absence,feng2023absence,liu2024dynamical}.
To simulate the dynamical behavior, we employ the fermionic $[[88,4,7]]_{\text{f}}$ double-chain bicycle code
as our fermionic LDPC code and use the 
four logical complex fermion modes without overlap for the simulation.
Additionally, we introduce two copies of the fermionic $[[37,1,7]]_{\text{f}}$ color code to function as processors.
The calculated time evolution of the particle number at the first and fourth site is shown in Fig.~\ref{TaskSimulation}(b).
The value of $\left\langle \overline{n}_1\right\rangle$ suddenly increases to one while the value of $\left\langle \overline{n}_4\right\rangle$ decreases to zero, which aligns with the expected characteristics of a skin state.
Notably, with our fault-tolerant logical operations and error correction techniques, the error-corrected data demonstrate significantly higher fidelity compared to the uncorrected data, as shown in Fig. \ref{TaskSimulation}(b).

In summary, we have developed a systematic workflow for constructing fermionic LDPC code memory based on weakly-self-dual CSS codes and proposed methods for executing logical operations based on lattice surgery.
We demonstrate that our fermionic LDPC code can be used to simulate dynamics of fermionic systems fault-tolerantly with effective error detection and correction.
Such fermion-to-fermion quantum computation has the potential to significantly reduce the computational complexity associated with simulating
fermionic systems.
Our work encourages the pursuit of high-performance fermionic LDPC codes with higher encoding rates and larger code distances, as well as the advancement of more efficient methods for executing logical operations with reduced overhead.

\textit{Note added.}
During the final stage of this work, we became aware of a related work~\cite{Mudassar2025arxiv}, which uses Majorana LDPC codes to encode logical qubits.

\begin{acknowledgments}
We thank Y. Li, Q. Xu, F. Wei, and Y. Mao for helpful discussions.
This work is supported by the National Natural Science Foundation of China (Grant No. 12474265 and No. 11974201) and Innovation Program for Quantum Science and Technology (Grant No. 2021ZD0301604).
We also acknowledge the support by center of high performance computing, Tsinghua University.
\end{acknowledgments}

\bibliographystyle{apsrev4-2}
\bibliography{reference}
%\nocite{*}

\onecolumngrid

\begin{center}
    {\textbf{End Matter}}
\end{center}

\twocolumngrid

In the End Matter, we will provide the guidelines for introducing ancilla Majorana 
operators and constructing stabilizer generators for fermionic lattice surgery.

For clarity, we use the following terminology: 
additional Majorana operators are termed ancilla Majorana operators,
stabilizers inherited from the original codes and modified 
during the merging are called modified stabilizers, 
newly introduced stabilizers acting solely on ancilla Majorana operators are termed 
gauge stabilizers, and stabilizers involving Majorana operators from both codes $A$ 
and $B$ are referred to as measurement stabilizers.

\emph{Method 1}.---Since logical operators in our fermionic LDPC code all have odd weight, 
for any $\overline{\gamma}_A$ in a fermionic LDPC code memory $A$, 
we can always select a triangular color code $B$~\cite{Bravyi2010njp,Schuckert2024arxiv} encoding a single pair of 
logicals $(\overline{\gamma}_B,\overline{\gamma}_B^\prime)$, 
with $|\overline{\gamma}_A|=|\overline{\gamma}_B|=|\overline{\gamma}_B^\prime|=d$.
Method 1 as detailed in the following is applicable to the lattice surgery between a fermionic LDPC code 
and a fermionic color code with $|\overline{\gamma}_A|=|\overline{\gamma}_B|$.

We align the support of $\overline{\gamma}_A$ and $\overline{\gamma}_B$ 
in a pair of parallel lines, each consisting of $d$ points, 
as shown in Fig.~\ref{LatticeSurgery}(b), and indexing them as 
$\left\{\gamma_{A,i}\right\}_{i=1}^{d}$ and 
$\left\{\gamma_{B,i}\right\}_{i=1}^{d}$ from left to right, respectively. 
We also index the checks supported on $A$ $(B)$ that are associated with 
$\overline{\gamma}_A$ ($\overline{\gamma}_B$) as $\left\{c_{A,i}\right\}_{i=1}^{N_{A}}$ 
($\left\{c_{B,i}\right\}_{i=1}^{N_{B}}$),
where $N_A$ and $N_B=d-1$ are the number of stabilizers associated with 
$\overline{\gamma}_A$ and $\overline{\gamma}_B$, respectively. 
As shown in Fig.~\ref{figs-stabilizers}, each $ c_B$ forms a plaquette on the boundary of the 
color code lattice.
We now introduce ancilla Majorana modes, modify the original stabilizers, 
and introduce new stabilizers step by step.

\begin{figure*}
	\centering 
	\includegraphics[width=0.8\textwidth]{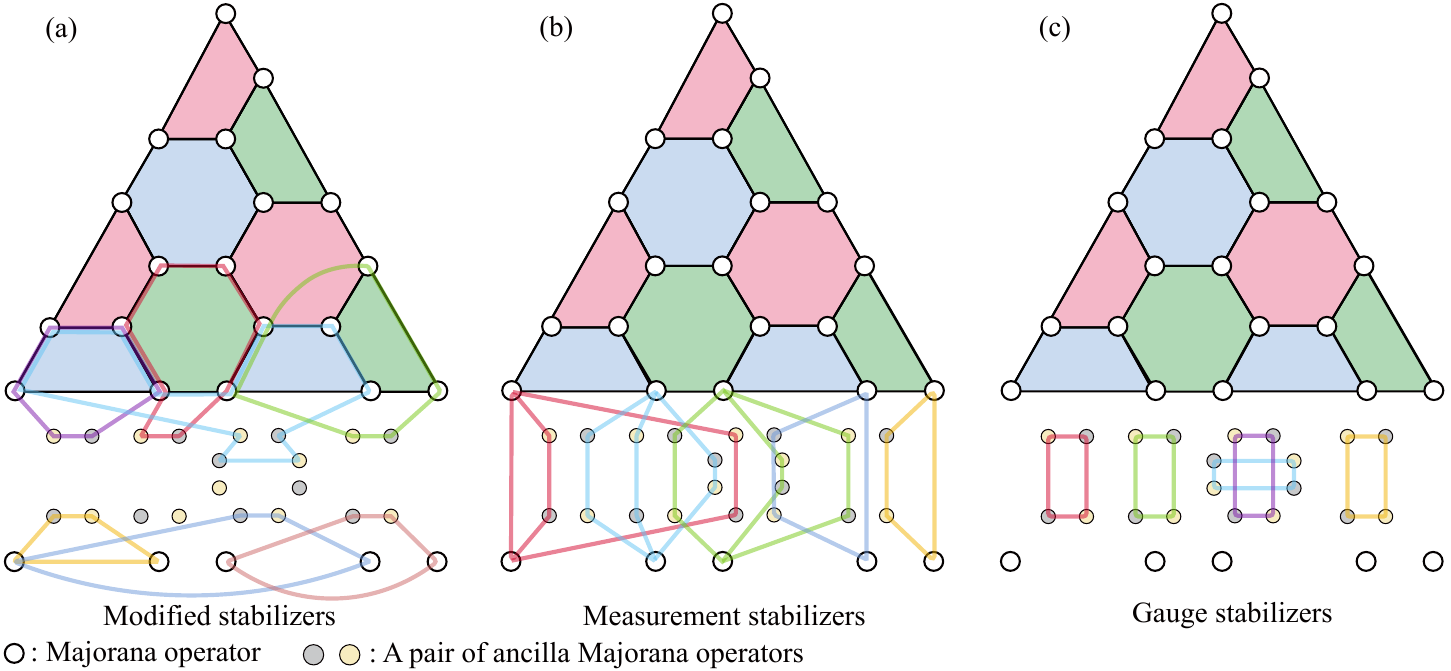}
	\caption{Illustration of modified stabilizers in (a), measurement stabilizers in (b),
		and gauges stabilizers in (c) for a specific case of fermionic lattice surgery using method 1.
		Before merging, the original stabilizers include those of the fermionic LDPC code, 
		the color code, and the stabilizers of ancilla Majorana operators (each stabilizer 
		includes a pair of ancilla Majorana operators forming a complex fermion).} 
	\label{figs-stabilizers}
\end{figure*}

We first modify all the stabilizers $c_{A,i}$ and $c_{B,i}$ by introducing ancilla Majorana operators. 
Specifically, consider each 
\begin{equation}
	(c_{A,i})_{\overline{A}}\sim \gamma_{A,j_1}\gamma_{A,j_2}\dots\gamma_{A,j_{m_{i}}},
\end{equation}
where $1\leq j_1< \dots <  j_{m_{i}}\leq d$, $\sim$ denotes equality up to a phase,
${\overline{A}}$ denotes the support of $\overline{\gamma}_A$,
$(\cdots)_{\overline{A}}$ is the operator constrained on ${\overline{A}}$,
and $m_{i}=|(c_{A,i})_{\overline{A}}|$ is an even number, ensured by
the commutativity between $\overline{\gamma}_A$ and $c_{A,i}$.
We introduce $m_{i}$ ancilla Majorana operators
$\left\{\gamma_{a,i,j_1},\gamma_{a,i,j_2}^\prime,\dots,\gamma_{a,i,j_{m_{i}-1}},\gamma_{a,i,j_{m_{i}}}^\prime\right\}$,
where the subscript $a$ indicates that these modes are associated with the stabilizer of $A$,
$i$ is the index of the stabilizer $c_{A,i}$,
and $j_1,j_2,\dots,j_{m_{i}}$ correspond to the positions of the Majorana operators in $c_{A,i}$
in the support of $\overline{\gamma}_A$.
The last two subscripts are used to index the order of the ancilla modes,
corresponding to the second subscript in $\gamma_{a,j}$ and 
$\gamma_{a,j}^{\prime}$ with $1\leq j\leq d_A-1$ as described in the main text.
We modify the stabilizer $c_{A,i}$ to 
\begin{equation}
	c_{A,i}^\prime\sim c_{A,i} \gamma_{a,i,j_1}\gamma_{a,i,j_2}^\prime\dots\gamma_{a,i,j_{m_{i}-1}}\gamma_{a,i,j_{m_{i}}}^\prime.
\end{equation}
Similarly, we introduce $m_{i}$ ancilla Majorana operators
$\left\{\gamma_{b,i,j_1},\gamma_{b,i,j_2}^\prime,\dots,\gamma_{b,i,j_{m_{i}-1}},\gamma_{b,i,j_{m_{i}}}^\prime\right\}$ 
for code $B$. 
For each pair $( \gamma_{a,i, j_{2n-1}}, \gamma_{a,i, j_{2n}}^\prime)$ with $1\leq n\leq m_i/2$: 
\begin{itemize}
	\item If $j_{2n-1}+1=j_{2n}$,
	we introduce a modified stabilizer in code $B$ as
	$ c_{B,i}^\prime\sim c_{B,i} \gamma_{b,i,j_{2n-1}} \gamma_{b,i,j_{2n}}^\prime$, 
	and a gauge stabilizer as $\gamma_{a,i,j_{2n-1}} \gamma_{a,i,j_{2n}}^\prime\gamma_{b,i,j_{2n-1}} \gamma_{b,i,j_{2n}}^\prime$.
	\item If $j_{2n-1}+2=j_{2n}$, 
	this modified stabilizer is defined as
	$ c_{B,i}^\prime\sim c_{B,j_{2n-1}} c_{B,j_{2n}-1} \gamma_{b,i,j_{2n-1}} \gamma_{b,i,j_{2n}}^\prime$,
	and the gauge stabilizer is $\gamma_{a,i,j_{2n-1}} \gamma_{a,i,j_{2n}}^\prime\gamma_{b,i,j_{2n-1}} \gamma_{b,i,j_{2n}}^\prime$.
	\item Otherwise,
	we add 4 additional Majorana modes 
	$\left\{ \gamma_{b,i,j_{2n-1}+1},\ \gamma_{b,i,j_{2n}-1}^\prime,\ \gamma_{a,i,j_{2n-1}+1},\ \gamma_{a,i,j_{2n}-1}^\prime\right\}$, 
	let the modified stabilizer be
	$ c_{B,i}^\prime\sim c_{B,j_{2n-1}} c_{B,j_{2n}-1} \gamma_{b,i,j_{2n-1}}\gamma_{b,i,j_{2n-1}+1}^\prime \gamma_{b,i,j_{2n}-1} \gamma_{b,i,j_{2n}}^\prime $,
	and introduce two gauge stabilizers as $\gamma_{a,i,j_{2n-1}} \gamma_{a,i,j_{2n}}^\prime\gamma_{b,i,j_{2n-1}} \gamma_{b,i,j_{2n}}^\prime$, 
	$\gamma_{a,i,j_{2n-1}+1} \gamma_{a,i,j_{2n}-1}^\prime\gamma_{b,i,j_{2n-1}+1} \gamma_{b,i,j_{2n}-1}^\prime$.
\end{itemize}

We repeat this process for each $c_{A,i}$. 
We note that for those unused $ c_{B,i}$ after iterating over all $c_{A,i}$, 
say the number of which as $M_B$.
For each such $c_{B,m}$, where $1\leq m\leq M_B$ represents the index of $c_{B,m}$,  
we further introduce four Majorana operators 
$\left\{\gamma_{b,N_A+m,i},\ \gamma_{b,N_A+m,i+1}^\prime,\ \gamma_{a,N_A+m,i},\ \gamma_{a,N_A+m,i+1}^\prime\right\}$,
define a modified stabilizer $c_{B,i}^\prime$ as
$c_{B,i} \gamma_{b,N_A+m,i} \gamma_{b,N_A+m,i+1}^\prime$, 
and add a gauge stabilizer as $\gamma_{b,N_A+m,i} \gamma_{b,N_A+m,i+1}^\prime \gamma_{a,N_A+m,i}\gamma_{a,N_A+m,i+1}^\prime$.

After that, we define $d$ measurement stabilizers $\left\{ M_i\right\}_{i=1}^d$ as
\begin{equation}
	M_i\sim \gamma_{A,i} \gamma_{B,i}\prod_n  \gamma_{a,n,i}^{(\prime)} \gamma_{b,n,i}^{(\prime)},
\end{equation}
where $\prod_n  \gamma_{a,n,i}^{(\prime)} \gamma_{b,n,i}^{(\prime)}$  
denotes the product of all ancilla Majorana operator (either of $\gamma$-type or $\gamma^\prime$-type) 
at site $i$
with $n$ running over the subscripts of the ancilla modes associated with $\gamma_{A,i}$ and $\gamma_{B,i}$.
In other words, the measurement stabilizer at site $i$ consists of
$\gamma_{A,i}$, $\gamma_{B,i}$ and all ancilla Majorana operators at this position.  
We note that $\ii\overline{\gamma}_A\overline{\gamma}_B$ is proportional to 
the product of all measurement stabilizers and the gauge stabilizers,
since all $\gamma_{A,i}$ and $\gamma_{B,i}$ appear once
and all ancilla Majorana operators appear twice in this product and thus cancel out.
Figure~\ref{figs-stabilizers} displays all modified, measurement, and
gauge stabilizers for an illustrative case, which is more complicated than 
that in Fig.~\ref{LatticeSurgery}(b).

\emph{Method 2}.---We now introduce the second method for fermionic lattice surgery 
applicable when $|\overline{\gamma}_A|=|\overline{\gamma}_B|$ or 
$|\overline{\gamma}_A|\neq|\overline{\gamma}_B|$, and when code $B$ is either a fermionic color code
or a fermionic LDPC code (see Supplemental Material for Pseudo-code).
As an example depicted in Fig.~\ref{figs-general-LS}(a), we align a weight-$5$ logical operator $\overline{\gamma}_A$ 
of fermionic LDPC code $A$ and a weight-$3$ logical operator $\overline{\gamma}_B$ of fermionic color code $B$
in two parallel lines, and index the Majorana modes in these two lines from top to bottom as
$\left\{\gamma_{A,i}\right\}_{i=1}^{5}$ and $\left\{\gamma_{B,i}\right\}_{i=1}^{3}$, respectively.

\begin{figure*}
	\centering 
	\includegraphics[width=\textwidth]{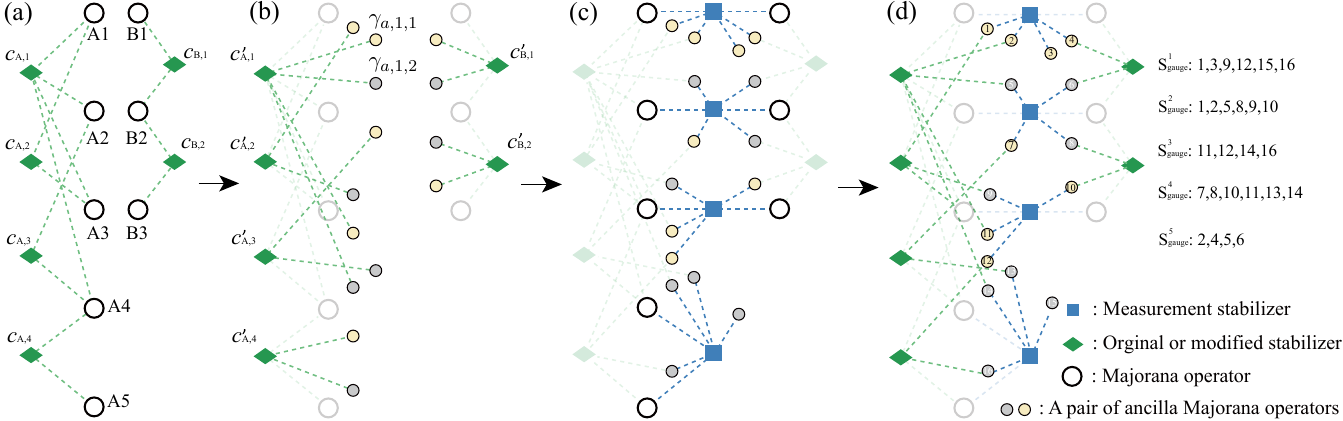}
	\caption{Illustration of how to add ancilla Majorana modes 
		and stabilizers for a merged code using method 2. Here we consider 
		two logical operators 
		in codes A and B, supported by five and three Majorana operators, respectively.
		Initially, we have two logical operators described by the original 
		stabilizer generators represented by green rhombuses and 
		Majorana operators represented by circles
		for 
		code $A$ in the support of $\overline{\gamma}_A$ 
		and for code $B$ in the support of $\overline{\gamma}_B$ (see (a)). 
		Pairs of ancilla Majorana operators (yellow and gray circles) are added following the rules 
		detailed in the text, and the modified stabilizer generators are introduced (see (b)).
		We then incorporate measurement stabilizer generators with an additional pair of
		ancilla Majorana operators to make their weights even (see (c)). 
		In (d), we label all ancilla modes and construct
		a graph with vertices corresponding to stabilizers and ancilla Majorana modes, 
		and edges defined between them accordingly, which 
		corresponds to Fig.~S3(a) in Supplemental Material. The gauge stabilizer generators are also presented 
		in Fig.~S3(c) in Supplemental Material.
	} 
	\label{figs-general-LS}
\end{figure*}

We first modify the original stabilizers associated with $\overline{\gamma}_A$ and $\overline{\gamma}_B$.
As shown in Fig.~\ref{figs-general-LS}(a), there are $N_A=4$ stabilizers for $\overline{\gamma}_A$ and $N_B=2$ stabilizers 
for $\overline{\gamma}_B$, whose weights of overlap with $\overline{\gamma}_A$ and $\overline{\gamma}_B$
are denoted as 
$\left\{m_{A,i}\right\}_{i=1}^{N_A}$ and $\left\{m_{B,i}\right\}_{i=1}^{N_B}$, respectively.
For each stabilizer $c_{A,i}$ and $c_{B,i}$, we introduce $m_{A,i}$ and $m_{B,i}$ ancilla Majorana operators
$\{\gamma_{a,i,j_1},\gamma_{a,i,j_2}^\prime,\cdots,\gamma_{a,i,j_{m_{A,i}-1}},\gamma_{a,i,j_{m_{A,i}}}^\prime\}$
and $\{\gamma_{b,i,j_1},\gamma_{b,i,j_2}^\prime,\cdots,\gamma_{b,i,j_{m_{B,i}-1}},\gamma_{b,i,j_{m_{B,i}}}^\prime\}$,
respectively,
where $j_k$ correspond to the vertical positions of the Majorana operators in $c_{A,i}$ or $c_{B,i}$.
We modify $c_{A(B),i}$ to $c_{A(B),i}^\prime$ by multiplying these ancilla Majorana operators as before,
as shown in Fig.~\ref{figs-general-LS}(b).
The difference is that, in this case, the modified stabilizer $c_{B,i}^\prime$ is constructed from a single
stabilizer $c_{B,i}$, instead of the product of several stabilizers in the previous case.

Next, we define the measurement stabilizers $\{M_i\}_{i=1}^{D_{AB}}$, where 
$D_{AB}=({|\overline{\gamma}_A|+|\overline{\gamma}_B|})/{2}$. 
For the first $\min\{|\overline{\gamma}_A|, |\overline{\gamma}_B|\}$ measurement stabilizer $M_i$, 
we define $M_i$ as the product of $\gamma_{A,i}$, $\gamma_{B,i}$,
and all ancilla Majorana modes at site $i$, that is,
\begin{equation}
	M_i\sim \gamma_{A,i} \gamma_{B,i} \prod_{n_a,n_b}\gamma_{a,n_a,i}^{(\prime)} \gamma_{b,n_b,i}^{(\prime)},
\end{equation}
where $\prod_{n_a,n_b}\gamma_{a,n_a,i}^{(\prime)} \gamma_{b,n_b,i}^{(\prime)}$ denotes the product of
all ancilla Majorana operators at site $i$ (either of $\gamma$-type or $\gamma^\prime$-type)
with $n_a$ and $n_b$ running over the subscripts of the ancilla modes associated with $\gamma_{A,i}$ 
and $\gamma_{B,i}$, respectively.
For the remaining $|(|\overline{\gamma}_A|-|\overline{\gamma}_B|) |/{2}$ measurement stabilizers 
$\{M_i\}_{i=\min\{|\overline{\gamma}_A|,|\overline{\gamma}_B|\}+1}^{D_{AB}}$,
they are defined as the product of a pair of Majorana operators in $\{\gamma_{A,i}\}$ or $\{\gamma_{B,i}\}$,
and the ancilla Majorana modes associated with them.
For example, in Fig.~\ref{figs-general-LS}(c), we define 
$M_4\sim \gamma_{A,4}\gamma_{A,5}\gamma_{a,1,4}^\prime\gamma_{a,3,4}^\prime \gamma_{a,4,5}^\prime$.
We note that if any $M_i$ has odd weight, we introduce an additional ancilla mode
$\gamma_{p,i}$, and modify $M_i$ to $M_i\gamma_{p,i}$ up to a phase factor to ensure even weight, as shown in the first 
and fourth measurement stabilizers in Fig.~\ref{figs-general-LS}(c).
The number of such additional ancilla modes is denoted as $N_P$, which must be even since the product of
all measurement stabilizers is an even-weight operator. 
Gauge stabilizers are then identified from a graph theory perspective,
as detailed in Supplemental Material.

\begin{widetext}
	\setcounter{equation}{0} \setcounter{figure}{0} \setcounter{table}{0} 
	\renewcommand{\theequation}{S\arabic{equation}} \renewcommand{\thefigure}{S\arabic{figure}}

In the Supplemental Material, we will review Majorana stabilizer codes and 
detail our method for identifying odd-weight logical operators in a fermionic LDPC 
code in Section S-1, validate the quantum state teleportation circuit in Section S-2, 
analyze the overhead of fermionic lattice surgery and prove the fault-tolerance of 
this procedure in Section S-3, and finally provide more details regarding calculations 
of the logical failure rate of the fermionic LDPC memory and simulations of fermionic circuits in Section S-4.

\section{S-1. Construction of fermionic LDPC codes}\label{sec:CodeConstruction}

In this section, we will follow Ref.~\cite{Bravyi2010njp} to review Majorana stabilizer codes and detail our method for identifying odd-weight logical operators in a fermionic LDPC code.

\subsection{A. Majorana stabilizer codes}

In this subsection, we will review the Majorana stabilizer codes (also see Ref.~\cite{Bravyi2010njp}).
Consider the group of Majorana operators, 
$\text{Maj}(2n)\equiv\left\{{\Gamma}=\eta\prod_{j=1}^{n} {\gamma}_j^{\alpha_j} ({\gamma}_j^\prime)^{\alpha_j^\prime}\mid
\alpha_j,\alpha_j^\prime \in \{0,1\}, \eta \in \{\pm 1, \pm \ii\} \right\}$,
where $\gamma_j$ and $\gamma_j^\prime$ with $j=1,\dots,n$ are Majorana operators  
satisfying 
$\{{\gamma}_i, {\gamma}_j\} = 2\delta_{ij}$, 
$\{{\gamma}_i^\prime, {\gamma}_j^\prime\} = 2\delta_{ij}$,
$\{{\gamma}_i, {\gamma}_j^\prime\} = 0$.
These Majorana operators arise from complex fermion's creation (denoted $c_j^\dagger$) and annihilation
operators (denoted $c_j$) via
$\gamma_j=c_j+c_j^\dagger$ and $\gamma_j^\prime=\ii (c_j-c_j^\dagger)$.
Analogous to Pauli stabilizer codes being defined based on the Pauli group~\cite{nielsen2010quantum}, 
Majorana stabilizer codes are defined based on the group of Majorana operators~\cite{Bravyi2010njp}. 

To handle the group multiplication structure more systematically, 
we consider a $2n$-dimensional vector space $\mathbb{F}_2^{2n}$, 
where $\mathbb{F}_2$ is the binary field $\{0, 1\}$.
We then construct an isomorphism $\phi: \text{Maj}(2n) \to \mathbb{F}_2^{2n} \otimes \{\pm 1, \pm \ii\}$, 
mapping each element ${\Gamma}$ to a binary vector $\vec{\gamma} \in \mathbb{F}_2^{2n}$ and a phase $\eta$,
\begin{equation}
\phi(\Gamma) = (\vec{\gamma}, \eta) \quad \text{where} \quad \vec{\gamma} = (\alpha_1, \alpha_1^\prime, \alpha_2, \alpha_2^\prime, \dots, \alpha_{n}, \alpha_{n}^\prime)^T.
\end{equation}
To ensure that $\text{Maj}(2n) \cong \mathbb{F}_2^{2n} \times \{\pm 1, \pm \ii\}$, we define
$(\vec{\gamma}_1, \eta_1) \cdot (\vec{\gamma}_2, \eta_2) \equiv (\vec{\gamma}_1 \oplus \vec{\gamma}_2, \eta_1 \eta_2 f(\vec{\gamma}_1, \vec{\gamma}_2)),$ where $\vec{\gamma}_1 \oplus \vec{\gamma}_2$ is the standard component-wise addition modulo $2$ on $\mathbb{F}_2^{2n}$, equivalent to the XOR operation, and the factor $f(\vec{\gamma}_1, \vec{\gamma}_2)\in \{1,-1\}$, termed the sign function, encodes a sign arising from the inherent anticommutation relations of the Majorana operators.
Based on a common operator ordering convention, we define the sign function as $f \equiv (-1)^{\sum_{k} [\vec{\gamma}_2]_k m_k}$, where $m_k$ counts the number of nonzero elements in $\{ [\vec{\gamma}_1]_k,[\vec{\gamma}_1]_{k+1},\dots, [\vec{\gamma}_1]_{2n}\}$.

For any elements $\Gamma_1,\Gamma_2 \in \text{Maj}(2n)$, they either commute or anticommute.
Specifically, we have $\Gamma_1 \Gamma_2 =\Gamma_2 \Gamma_1  (-1)^{\text{p}(\vec{\gamma}_1)\text{p}(\vec{\gamma}_2) \oplus \vec{\gamma}_1 \cdot \vec{\gamma}_2}$~\cite{Bravyi2010njp}, where $\vec{\gamma}_j$ represents the vector mapped from $\Gamma_j$, and $\text{p}(\vec{\gamma})$ is a parity function denoting the parity of the number of nonzero entries in the vector $\vec{\gamma}$: 0 indicates an even parity and 1 indicates an odd parity.
We see that two even-weight Majorana strings commute if their 
overlap has even weight and anticommute if their overlap has odd weight. 
One thus can encode logical Majorana operators as even-weight physical Majorana operator strings.
However, if two logical Majorana operators are located in different blocks, which have no
overlap, then odd-weight strings must be considered, as they anticommute with each other.

The Majorana stabilizer code is defined by a Majorana stabilizer group 
$\mathcal{S}_{\text{maj}} = \langle {g}_1, \dots, {g}_m \rangle$ generated by independent 
Majorana operator strings  
$g_j\in \text{Maj}(2n)$ with $j=1,\dots,m$. Each element in the group commutes with any other elements in it.
The code space $\mathbb{H}_L$ is a subspace spanned by all state vectors that are simultaneous 
eigenstates of all stabilizer operators corresponding to eigenvalue $+1$~\cite{nielsen2010quantum}. The check matrix is 
defined as
\begin{equation}
H =\left(\begin{array}{c}
\vec{g}_{1}^T
\\ \vdots
\\ \vec{g}_{m}^T
\end{array} \right),
\end{equation}
where the vector $\vec{g}_{j}$ with $j=1,\dots,m$ corresponds to a stabilizer generator $g_j$. 
To preserve the parity of a physical fermion system, we suppose that all stabilizers have 
even weight. In this case, $H$ is a parity-check matrix of a self-orthogonal linear code, 
satisfying $HH^T = 0$. 
Logical operators correspond to operators that commute with all stabilizers but are
not in the stabilizer group. Mathematically, a set of all logical operators 
$\mathcal{L}=\mathcal{C}(\mathcal{S}) \setminus \mathcal{S}$,
where $\mathcal{C}(\mathcal{S})$ denotes the centralizer whose elements commute with 
all elements in $\mathcal{S}$. The code distance is determined by the minimum
weight in $\mathcal{L}$.

We can use weakly-self-dual CSS codes on qubits to construct Majorana stabilizer codes~\cite{Bravyi2010njp,Schuckert2024arxiv}.
Let $A$ be the parity-check matrix of a weakly-self-dual CSS code with parameters 
$[[n, k, d]]$, satisfying $A A^T = 0$~\cite{MacKay2004ieee}. 
We write $A=(\vec{u}_1,\vec{u}_2,\dots,\vec{u}_{m/2})^T$, 
where $m$ is an even positive integer, and
suppose that the row vectors in $A$ are linearly independent, corresponding to 
stabilizer generators.
Consequently, $\{\vec{u}_1,\vec{u}_2,\dots, \vec{u}_{m/2}, \vec{\gamma}_1,\dots,\vec{\gamma}_{k} \}$
forms a basis for $\ker{(A)}$ with $\left\{\vec{\gamma}_j\mid j=1,\dots,k \right\}$ 
corresponding to independent logical operators.
The set of logical operators is expressed as $L=\{\sum_{j=1}^{m/2} a_j \vec{u}_j + \sum_{j=1}^{k} b_j \vec{\gamma}_j \mid a_j,b_j\in\{0,1\} 
\text{ and } (b_1,b_2,\dots,b_{k})^T \neq \vec{0} \}$ with each vector 
contributing an $X$ and a $Z$ logical operator.
The code distance is thus determined by the minimum weight of the vectors in $L$.
We now use the direct sum $H = A \oplus A$ to form a Majorana stabilizer code's check matrix
for $n$ complex fermion modes. 
The code can be represented by a Tanner graph which has mirror 
reflection symmetry, where the blue and green squares 
correspond to $\gamma$-type stabilizers and $\gamma'$-type stabilizers, respectively,
as shown in Fig.~\ref{fig:graph-representation}(a).
The set $L$ also describes all logical operators in the Majorana stabilizer code. 
However, for the fermionic LDPC codes, we aim to find odd-weight logical 
operators with even overlap. 
We examine fermionic finite projective geometry codes,
fermionic finite Euclidean geometry codes, fermionic bicycle 
codes, and fermionic stacked LDPC codes, finding that these codes considered
have $2k$ odd-weight logical operators with even overlap. 
In other words, the vectors $\vec{\gamma}_1,\dots,\vec{\gamma}_{k}$ 
all have odd weight and even overlap (see Table~\ref{tab:fermion rate}). 
This finding indicates that these fermionic LDPC codes can encode $k$ logical complex
fermion modes. Thus, the code distance remains determined by the minimum weight of the 
vectors in $L$. To sum up, these fermionic LDPC codes maintain the same code parameters 
$[[n,k,d]]_{\text{f}}$ as the original qubit codes. 

\subsection{B. Examples of weakly-self-dual CSS codes}

In this subsection, we will review four types of weakly-self-dual CSS codes:
Kitaev Majorana code~\cite{kitaev2006anyons}, bicycle code~\cite{MacKay2004ieee} (Fig.~\ref{fig:graph-representation}(b)), 
Euclidean geometry codes on finite fields~\cite{cao2012novel,aly2008class} (Fig.~\ref{fig:graph-representation}(c)), and 
projective geometry codes on finite fields~\cite{farinholt2012quantum} (Fig.~\ref{fig:graph-representation}(d)).
{Very recently, we provide a method to construct weakly-self-dual CSS codes based on a CSS code that may
not be weakly self dual (see Ref.~\cite{liu2026stacked}).}

\begin{figure}[htbp]
	\centering
	\includegraphics[width=\textwidth]{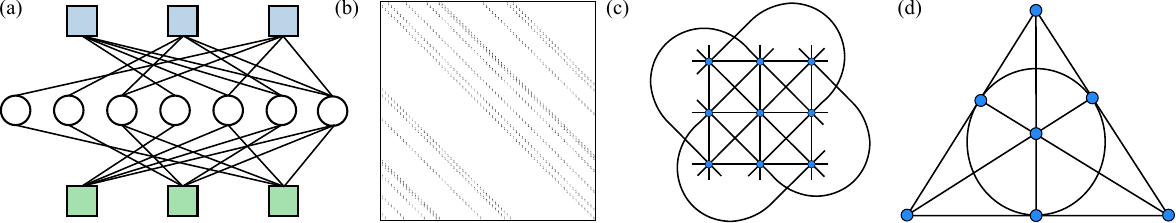}
	\caption{(a) The Tanner graph description of the Steane code. The circles, blue and green squares represent 
		the fermions, the $\gamma$-type stabilizers, and the $\gamma'$-type stabilizers, respectively. 
		(b) Illustration of a circulant matrix used to construct bicycle codes. 
		(c) Illustration of the Euclidean geometry when $(m,q)=(2,3)$. In this special case, the lines 
		are the straight lines on a torus with different directions. (d) The Fano plane, 
		which is the simplest example of a finite projective plane. 
		We will use many planes like this in code construction from finite projective geometry.
		}
	\label{fig:graph-representation}
\end{figure}

\subsubsection{Kitaev Majorana code\label{Kitaev mapping}}
According to Ref.~\cite{kitaev2006anyons}, any $[[n,k,d]]$ qubit stabilizer code 
defined by a qubit stabilizer group $\mathcal{S}$ can be mapped to a $[[2n,k,2d]]_{\text{f}}$ 
Majorana stabilizer code defined by $\mathcal{S}_{\text{maj}}$. For each qubit 
$j$, four Majorana operators ${\gamma}_j^g,\ {\gamma}_j^{x,y,z}$
are introduced so that the Majorana stabilizer code is defined by $4n$ Majorana operators.
The Pauli operators of each qubit $j$ in the generators of $\mathcal{S}$ are mapped to
the product of these Majorana operators by
\begin{equation}
\begin{cases}
{\sigma}_{j}^x\rightarrow \ii{\gamma}_j^{x}{\gamma}_j^g \\
{\sigma}_{j}^y\rightarrow \ii{\gamma}_j^{y}{\gamma}_j^g\\
{\sigma}_{j}^z\rightarrow \ii{\gamma}_j^{z}{\gamma}_j^g\\
\end{cases},
\end{equation}
forming generators in $\mathcal{S}_{\text{maj}}$.
In addition, $n$ gauge stabilizers ${G}_j={\gamma}_j^{x}{\gamma}_j^{y}{\gamma}_j^{z}{\gamma}_j^g$
are added into $\mathcal{S}_{\text{maj}}$. 
The $k$ logical operators of the Majorana stabilizer code are given by replacing the 
Pauli operators in the original qubit code with the Majorana modes. 
Specifically, consider a $[[n,k,d]]$ qubit CSS code with an $X$ check matrix $H_X$
and $Z$ check matrix $H_Z$ satisfying $H_X H_Z^T = 0$.  
The corresponding Majorana stabilizer code has the check matrix  
\begin{equation}
H = \begin{pmatrix}
H_X & H_X & 0 & 0\\
H_Z & 0   & 0 & H_Z\\
I_n   & I_n   & I_n & I_n
\end{pmatrix},
\end{equation}
where $I_n$ is an $n\times n$ identity matrix.
In this case, 
the group of Majorana operators reads
$\text{Maj}(2n)\equiv \{{\Gamma}=\eta\prod_{j=1}^{n} ({\gamma}_j^g)^{\alpha_{j,1}} 
(\gamma_j^x)^{\alpha_{j,2}} (\gamma_j^y)^{\alpha_{j,3}} (\gamma_j^z)^{\alpha_{j,4}} |
\alpha_{j,1},\alpha_{j,2},\alpha_{j,3},\alpha_{j,4} \in \{0,1\}, \eta \in \{\pm 1, \pm \ii\} \}$.
The exponent factors of a Majorana stabilizer generator in $\text{Maj}(2n)$
correspond to each row in $H$. 
Since there exists a stabilizer corresponding to a vector of $(1, 1, \dots, 1)$, all
logical operators must have even weight so as to
commute with this stabilizer. In other words, one cannot find odd-weight logical operators
in the Kitaev Majorana stabilizer code.

\subsubsection{Finite Euclidean geometry codes\label{Finite Euclidean geometry code}}

In 2008, Aly proposed a class of self-dual qubit CSS codes based on 
the incidence matrix of finite Euclidean geometry~\cite{aly2008class}. 
However, we find that Aly's finite Euclidean geometry codes fail to produce odd-weight 
logical Majorana operators. We adopt the revised version of finite Euclidean geometry codes 
proposed by Cao et al. in Ref.~\cite{cao2012novel} to construct our fermionic 
LDPC code that contains odd-weight logical Majorana operators.

Here, we will follow Ref.~\cite{cao2012novel} to briefly 
review the generalized finite Euclidean geometry code. 
For positive integers \((m,q)\), a Euclidean geometry \(\text{EG}(m,q)\) consists of \(q^m\) points, each represented by an \(m\)-tuple. Here, \(m\) denotes the spatial dimension, while \(q\) represents the number of points per dimension. As shown in Fig.~\ref{fig:graph-representation}(c), \(\text{EG}(2,3)\) has two dimensions with three points per dimension, forming the Euclidean space through their Cartesian product. Consider \(q=p^s\) for some prime \(p\) and positive integer \(s\), with tuple elements drawn from the Galois field \(\mathbb{F}_{q}\). First, we describe point representation before defining lines. For any prime \(p\) and positive integer \(s\), there exists \(\alpha\) satisfying \(\alpha^{p^s}-\alpha=0\), where \(\mathbb{F}_{p^s}=\left\{0,1,\alpha,\alpha^2,\dots,\alpha^{p^s-2}\right\}\)~\cite{storme2006finite}. 
For any positive integer \(m\), \(\mathbb{F}_{q^m}\) constitutes an extension field over \(\mathbb{F}_q\), with elements expressible as \(e=a_0+a_1\alpha+\dots+a_{m-1}\alpha^{m-1}\) (\(a_i \in \mathbb{F}_q\)). As illustrated in 
Fig.~\ref{fig:graph-representation}(c), each line in \(\text{EG}(2,3)\) contains exactly three points. Thus, a bijection exists between \(\mathbb{F}_{q^m}\) and \(m\)-tuples in \(\text{EG}(m,q)\). A line in \(\text{EG}(m,q)\) (or \(\mathbb{F}_{q^m}\)) is defined as the set containing exactly \(q\) points: \(\left\{e_j + \beta e_k \,|\, \beta \in \mathbb{F}_q\right\}\), where \(e_j, e_k \in \mathbb{F}_{q^m}\) and \(e_k \neq 0\)~\cite{storme2006finite}. We now enumerate and classify these lines. There are \(q^m\) choices for \(e_j\) and \(q^m-1\) for \(e_k\). Since each line contains exactly \(q\) points, \(q\) distinct \(e_k\) choices yield identical lines. Additionally, \(\left\{e_j + \alpha \beta e_k\right\}\) with \(\beta \in \mathbb{F}_{q}\setminus\{0\}\) produces identical line sets. Consequently, the total number of lines is \(\frac{q^m(q^m-1)}{q(q-1)}\), 
while the number of lines excluding the origin point is \(\frac{(q^{m-1}-1)(q^m-1)}{q-1}\)~\cite{cao2012novel,storme2006finite}.  

For a line \(L=\left\{e_{l_0},e_{l_1},\dots,e_{l_{q-1}}\right\}\), 
translation via multiplication by a nonzero \(e \in \mathbb{F}_{q^m}\) 
yields \(eL=\{ee_{l_0},ee_{l_1},\dots,ee_{l_{q-1}}\}\). The \(q^m-1\) 
lines \(\left\{L,eL,e^2L,\dots,e^{q^m-2}L\right\}\) form an equivalence class. 
There exist \(J=\frac{q^{m-1}-1}{q-1}\) such equivalence classes~\cite{aly2008class}.
\(J(q^m-1) \times (q^m-1)\) binary check matrices \(H_j\) (\(1\leq j\leq J\))
are then constructed with rows corresponding to lines in the \(j\)-th 
equivalence class (each row is a binary vector indicating inclusion of points 
from the origin-excluding line)~\cite{cao2012novel}. 
The full check matrix is then defined as
\begin{equation}
	H_X=H_Z=\left(H_1^T,\dots,H_J^T, H_1, \dots, H_J \right).
\end{equation}
Since $H_j$ with $j=1,\dots,J$ are cyclic matrices, $H_X$ satisfies $H_X H_X^T=0$
and thus gives rise to a self-dual CSS code.

\subsubsection{Finite projective plane code\label{Finite projective geometry codes}}

The finite projective plane code is a class of self-dual CSS codes constructed by Farinholt, with the help of the finite projective plane~\cite{albert2015introduction}. 
A projective plane \(\text{PG}(2, q)\) with \(q = 2^s\) consists of a set of points and lines that satisfy:
\begin{enumerate}
\item Any two distinct points determine a unique line;
\item Any two distinct lines intersect at a unique point;  
\item There exist four points, no three of which are collinear;
\item Each line contains \(q+1\) points;
\item Each point lies on \(q+1\) lines;
\item There are exactly \(q^2 + q + 1\) points and lines~\cite{albert2015introduction}.  
\end{enumerate}
We depict $\text{PG}(2,2)$ as an example in Fig.~\ref{fig:graph-representation}(d).
The points can be represented by the equivalence classes of 
$[x,y,z]\equiv\left\{[cx, cy, cz]|[x,y,z]\neq[0,0,0],\ c\in \mathbb{F}\setminus\left\{0\right\}\right\}$. 
We note that there are ${(q^3-1)}/({q-1})=q^2+q+1$ such equivalence classes, 
which correspond to the points in the projective plane. 
The lines can be represented by the equivalence classes 
$(a,b,c)\equiv\left\{(\lambda a, \lambda b, \lambda c)|(a,b,c)\neq[0,0,0],\ \lambda\in \mathbb{F}\setminus\left\{0\right\}\right\}$, such that $ax+by+cz=0$. Each line contains $q+1$ points, and 
there are $q^2+q+1$ such lines.

The incidence matrix of the projective plane is a binary matrix \(B\) 
with its entry 
\(B_{i,j} = 1\) if point \(i\) lies on line \(j\), and \(0\) otherwise.  Specifically, we only use a subset of lines and points (i.e., skew lines and non-hyperoval points~\cite{farinholt2012quantum}) to construct matrix $B$.
The check matrix of this self-dual CSS code is then defined as

\begin{equation}
H_X=H_Z= \begin{pmatrix}
B & \textbf{1}\\
\end{pmatrix},
\end{equation}
where \(\textbf{1}\) is a column vector of ones. 
We note that $HH^T=BB^T+\textbf{1}\textbf{1}^T=0$, because the overlap 
between any two rows of \(B\) is odd ($q+1$ or $1$), and thus $BB^T$ is 
a $\frac{q^2-q}{2} \times \frac{q^2-q}{2}$ binary matrix with all entries being one. 

\subsubsection{Bicycle code}\label{Mackay codes}

In 2004, Mackay et al. proposed a so-called bicycle code in Ref.~\cite{MacKay2004ieee}, 
which is a self-dual CSS code with two check matrices $H_X$ and 
$H_Z$ that satisfy $H_X=H_Z$ and $H_XH_X^T=0$, thereby 
enabling the construction of Majorana LDPC codes.
The check matrix of the bicycle code on $2n$ qubits is defined by
\begin{equation}
	H_X=H_Z = \left[C, C^T\right],  
\end{equation}
where $C$ is an $n\times n$ cyclic matrix; each row of a cyclic matrix 
is obtained by right-cyclic shifting of the previous row, 
as illustrated in Fig.~\ref{fig:graph-representation}(b). Formally, 
let $S$ be an $n\times n$ matrix defined by 
$S_{i,j}=1$ ($1\leq i,j\leq n$) if $j=(i+1)\ \text{mod}\ n$ and 
$S_{i,j}=0$ otherwise, which reads
\begin{equation}
	S=\left(\begin{matrix}
		0 & 1 & 0 & \dots & 0\\
		0 & 0 & 1 & \dots & 0\\
		\vdots & \vdots & \vdots & \dots & 0\\
		0 & 0 & 0 & \dots & 1\\
		1 & 0 & 0 & \dots & 0\\
	\end{matrix} \right).
\end{equation}
The cyclic matrix $C$ with weight $r$ is defined by 
$C=\sum_{i=1}^{r}S^{a_i}$, where $1\leq a_i \leq n$.
The code is a self-dual CSS code since $H_X H_X^T=CC^T+C^TC=0$.

As a demonstration, we construct a $[[100,20,7]]_{\text{f}}$ bicycle Majorana code
via a blind search. The same approach can be applied to search for other codes accordingly.
In addition, we find that the other three types of codes proposed by Mackay et al. 
in Ref.~\cite{MacKay2004ieee} are not suitable for the fermionic LDPC code since
one cannot find odd-weight logical Majorana operators similar to 
the Kitaev Majorana code.

\subsection{C. Extracting odd-weight logical Majorana operators\label{Fermionic stabilizer codes}}

We now describe how to identify odd-weight fermionic logical operators from the
Majorana check matrix $H$ constructed from check matrices $H_X=H_Z$ of 
a self-dual $[[n,k_{\text{q}},d]]$ CSS code, that is,
\begin{equation}
	H=\begin{pmatrix}
		H_\gamma & 0 \\
		0        & H_{\gamma^\prime}
	\end{pmatrix}
\end{equation}
where $H_\gamma=H_{\gamma^\prime}=H_X=H_Z$. $H$ is an $(n-k_{\text{q}})\times 2n$ binary matrix 
with all row vectors linearly independent such that $\text{rank}(H)=n-k_{\text{q}}$. 
All logical operators correspond to nontrivial vectors in $\ker(H)/\text{im}(H^T)$.
Let $S_{\text{L}}=\left\{\vec{v}_j|j=1,\dots,2k_{\text{q}} \right\}$ be a basis for $\ker{H}/\text{im}(H^T)$,
where the weight of $\vec{v}_j$ can be either even or odd~\cite{roffe2019quantum}. 
An orthonormal set $\left\{\vec{\gamma}_j \right\}_{j=1}^{2k_{\text{f}}}$ with $k_{\text{f}}\leq k_{\text{q}}$ 
consisting of the odd-weight vectors will be constructed from $S_{\text{L}}$.
We show the procedure to construct the set constituting a basis for our code space of 
fermions in the following, generalizing the method in Ref.~\cite{prasetiaextended} to
the $\mathbb{F}_2$ case. 

If vectors in $S_{\text{L}}$ all have even weight, then there 
do not exist odd-weight logical operators, since the addition of any two even-weight
binary vectors yields an even-weight vector.
Otherwise, there exists at least one odd-weight vector, say, $\vec{v}_1$. 
We choose $\vec{\gamma}_1=\vec{v}_1$ as the first element in the odd-weight 
vector set and update the remaining $2k_{\text{q}}-1$ vectors in $S_{\text{L}}$ 
as $\vec{v}_j^\prime\to\vec{v}_j-\left( \vec{v}_j\cdot\vec{v}_1\right)\vec{v}_1$.
As a result, $\vec{v}_1$ is orthogonal to all the remaining vectors $\vec{v}_j^\prime$ ($j\geq 2$).
We repeat this process until the remaining vectors are all of even weight,
leading to an orthonormal set consisting of odd-weight vectors. 
The algorithm is summarized in Algorithm 1. We note that this algorithm 
is similar to the algorithm in Ref.~\cite{tansuwannont2025clifford}, which is aimed at 
constructing compatible symplectic basis from a weakly-self-dual CSS code.

\begin{algorithm}[H]
	\label{alg:Gram-Schmidt}
	\caption{Orthonormalization of $\left\{\vec{v}_j \right\}_{j=1}^{2k_{\text{q}}}$ over $\mathbb{F}_2$}
	\SetAlgoLined
	\KwIn{A set of binary vectors $\{\vec{v}_1, \vec{v}_2, \dots, \vec{v}_{2k_{\text{q}}}\}$ over $\mathbb{F}_2$}
	\KwOut{The orthonormal basis of odd weight operators $\left\{\vec{\gamma}_j \right\}_{j=1}^{2k_{\text{f}}}$}
	
	\SetKwFunction{FMain}{Gram-Schmidt orthonormalization over $\mathbb{F}_2$}
	\SetKwProg{Fn}{Function}{:}{}
	
	\Fn{\FMain{$\{\vec{v}_1, \dots, \vec{v}_{2k_{\text{q}}}\}$}}{
		$\mathcal{B} \gets \{\vec{v}_1, \dots, \vec{v}_{2k_{\text{q}}}\}$ \tcp*{Copy input vectors}
		$\mathcal{O} \gets \emptyset$ \tcp*{Initialize orthogonal basis set}
		
		\While{$\exists \vec{b}_i \in \mathcal{B} \text{ such that } \vec{b}_i \cdot \vec{b}_i = 1$}{
			$\mathcal{O} \gets \mathcal{O} \cup \left\{\vec{b}_i \right\}$\;
			$\mathcal{B} \gets \mathcal{B} \setminus \left\{\vec{b}_i \right\}$\;
			\ForEach{$\vec{b} \in \mathcal{B}$}{
				$\vec{b} \gets \vec{b} - (\vec{b} \cdot \vec{b}_i)\vec{b}_i$\;
			}
		}
		\Return $\mathcal{O}$\;
	}
\end{algorithm}

The following theorem provides a necessary and sufficient condition for 
the existence of an orthonormal basis for $\mathrm{span}(S_{\text{L}})$,
i.e., an orthonormal basis consisting of odd-weight vectors.

\begin{theorem}
Let $G$ be a $2k_{\text{q}} \times 2k_{\text{q}} $ binary matrix defined by 
$G_{ij} = \vec{v}_i \cdot \vec{v}_j$ with $1 \le i,j \le 2k_{\text{q}}$.
An orthonormal basis for $\mathrm{span}(S_{\text{L}})$ 
exists if and only if there exists an invertible matrix $P$ such that
\begin{equation}
	P G P^T = I_{2k_{\text{q}}}.
\end{equation} 
\end{theorem}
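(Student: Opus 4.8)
The plan is to connect the existence of an orthonormal basis for $\mathrm{span}(S_{\text{L}})$ to a congruence-diagonalization statement for the Gram matrix $G$, exploiting that the bilinear form here is a \emph{symmetric} form over $\mathbb{F}_2$ with the peculiarity that $\vec{v}\cdot\vec{v}$ need not vanish (since over $\mathbb{F}_2$ the form is not alternating: $[\vec v]_i^2=[\vec v]_i$). First I would fix the setup: an orthonormal basis $\{\vec\gamma_j\}$ for the span is a basis whose Gram matrix is $I_{2k_{\text q}}$. Any other basis $\{\vec v_i\}$ of the same span is related to $\{\vec\gamma_j\}$ by an invertible change-of-basis matrix $Q$ over $\mathbb{F}_2$, and the two Gram matrices then satisfy $G = Q\, I\, Q^T = Q Q^T$. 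Conversely, if $PGP^T = I$ for some invertible $P$, then the vectors $\vec\gamma_i := \sum_j [P]_{ij}\vec v_j$ form a basis of $\mathrm{span}(S_{\text{L}})$ (invertibility of $P$) whose Gram matrix is exactly $PGP^T = I$, i.e.\ an orthonormal basis. So the content of the theorem is precisely this transport of Gram matrices under change of basis, with $P = Q^{-1}$; I would write out both directions cleanly.

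The one subtlety that must be handled carefully is the identification ``orthonormal basis of the span $\Longleftrightarrow$ basis whose Gram matrix is $I$.'' The forward direction is immediate from the definition of orthonormal (pairwise dot product $0$, self dot product $1$). For the reverse direction I must verify that a set of $2k_{\text q}$ vectors in $\mathrm{span}(S_{\text L})$ with Gram matrix $I$ is automatically linearly independent: if $\sum_i c_i\vec\gamma_i = 0$, then dotting with $\vec\gamma_j$ gives $c_j = 0$, so independence is free, and since $\dim\mathrm{span}(S_{\text L}) = 2k_{\text q}$ (the $\vec v_i$ being a basis) such a set spans. Hence ``has Gram matrix $I$'' and ``is an orthonormal basis'' coincide, and the theorem reduces to the change-of-basis identity above.

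Concretely the steps are: (i) record that for the generating set $S_{\text L} = \{\vec v_i\}$ one has $G_{ij} = \vec v_i\cdot\vec v_j$, and that for \emph{any} family $\vec w_i = \sum_j [M]_{ij}\vec v_j$ the Gram matrix of $\{\vec w_i\}$ is $M G M^T$; (ii) prove ($\Leftarrow$): given invertible $P$ with $PGP^T = I$, set $\vec\gamma_i = \sum_j[P]_{ij}\vec v_j$; these lie in the span, their Gram matrix is $I$, they are linearly independent by the dotting argument, and there are $2k_{\text q} = \dim\mathrm{span}(S_{\text L})$ of them, so they form an orthonormal basis; (iii) prove ($\Rightarrow$): given an orthonormal basis $\{\vec\gamma_i\}$ of the span, write $\vec v_i = \sum_j[Q]_{ij}\vec\gamma_j$ with $Q$ invertible (change of basis between two bases of the same space), whence $G = Q Q^T$, and take $P = Q^{-1}$ so that $PGP^T = P Q Q^T P^T = I$.

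I expect the main obstacle to be a bookkeeping one rather than a conceptual one: over $\mathbb{F}_2$ there is no positive-definiteness to lean on, the form is degenerate in general and not alternating, so one must resist importing intuition from inner-product spaces and instead argue purely through the transformation law $G \mapsto M G M^T$ and the linear-independence-from-identity-Gram-matrix lemma. A secondary point worth a sentence is that the theorem as stated is about \emph{existence} of such a $P$ and does not by itself guarantee that Algorithm~1 produces it; that Algorithm~1 in fact succeeds whenever $(1,\dots,1)^T\notin\mathrm{im}(A)$ is the companion fact proved elsewhere (and noted in the main text), so here I would only remark that the theorem characterizes when the orthonormalization is possible, leaving the algorithmic realization to the surrounding discussion.
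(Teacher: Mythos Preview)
Your proposal is correct and follows essentially the same approach as the paper: both directions reduce to the change-of-basis identity $G \mapsto MGM^T$ for the Gram matrix, with the orthonormal basis obtained as $\vec\gamma_i = \sum_j [P]_{ij}\vec v_j$. You are slightly more careful than the paper in two places---explicitly verifying linear independence of the $\vec\gamma_i$ via the dotting argument, and explicitly securing invertibility of $P$ in the $(\Rightarrow)$ direction by writing $P=Q^{-1}$ for the change-of-basis matrix $Q$---whereas the paper simply asserts ``there exists a matrix $P$'' and ``a new basis'' without spelling these out; but the underlying argument is identical.
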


\begin{proof}
	\label{Gram matrix}
	If there exists an orthonormal basis $\{\vec{\gamma}_1,\dots,\vec{\gamma}_{2k_{\text{q}}}\}$ 
	for the span, then we can write each vector in the basis as a linear combination of
	the vectors $\vec{v}_1, \dots, \vec{v}_{2k_{\text{q}}}$, that is, there exists a $2k_{\text{q}} \times 2k_{\text{q}}$
	matrix $P$ such that 
	\begin{equation}
		\begin{pmatrix} \vec{\gamma}_1 \\ \vdots \\ \vec{\gamma}_{2k_{\text{q}}} \end{pmatrix} 
		= P \begin{pmatrix} \vec{v}_1 \\ \vdots \\ \vec{v}_{2k_{\text{q}}} \end{pmatrix}.
	\end{equation}
    Since $\{\vec{\gamma}_1,\dots,\vec{\gamma}_{2k_{\text{q}}}\}$ is orthonormal, we have
    $\vec{\gamma}_i \cdot \vec{\gamma}_j=\delta_{i,j}=
    \sum_{i^\prime, j^\prime}P_{i i^\prime} P_{j j^\prime}\vec{v}_{i^\prime}\cdot \vec{v}_{j^\prime}
    =[P G P^T]_{ij}$, that is, $P G P^T=I_{2k_{\text{q}}}$.
    
    Conversely, if there exists an invertible matrix $P$ such that $P G P^T = I_{2k_{\text{q}}}$,
    we define a new basis $\{\vec{\gamma}_1,\dots,\vec{\gamma}_{2k_{\text{q}}}\}$  by
	\begin{equation}
		\begin{pmatrix} \vec{\gamma}_1 \\ \vdots \\ \vec{\gamma}_{2k_{\text{q}}} \end{pmatrix} 
		= P \begin{pmatrix} \vec{v}_1 \\ \vdots \\ \vec{v}_{2k_{\text{q}}} \end{pmatrix},
	\end{equation}
	which clearly gives $\vec{\gamma}_i \cdot \vec{\gamma}_j=\delta_{i,j}$.
	Thus, $\{\vec{\gamma}_1,\dots,\vec{\gamma}_{2k_{\text{q}}}\}$ forms an orthonormal basis.
\end{proof}

We numerically calculate the number of odd-weight logical Majorana operators forming $k_{\text{f}}$ logical complex fermion modes for the check matrix $H$ constructed based on bicycle codes, finite Euclidean geometry codes, and finite projective plane codes, and compare them with the number of logical qubits $k_{\text{q}}$ for the corresponding self-dual CSS codes in Table~\ref{tab:fermion rate}.
We see that for all cases, the algorithm can efficiently find the same number of logical complex fermion modes as that of logical qubits, thus providing a high encoding rate.

\begin{table}[htbp]
	\centering
	\caption{Comparison between the number of logical qubits and the number of logical complex fermion modes for different codes. 
		$d$ denotes the distance of the self-dual CSS codes.}
	\label{tab:fermion rate}
	\begin{tabular}{|c|c|c|c|c|c|c|c|c|c|c|c|c}
		\hline
		Name & Bicycle code & $\text{EG}(2,4)$ & $\text{EG}(2,8)$& $\text{EG}(3,2)$&$\text{EG}(3,4)$&$\text{PG}(2,4)$ & $\text{PG}(2,8)$ & $\text{PG}(2,16)$&$\text{PG}(2,32)$\\
		\hline
		\(n\) & 100 & 30 & 126 & 42 & 630 & 16 & 64 & 256 & 1024\\
		\(k_{\text{q}}\) & 20  & 2  & 38  & 30 & 506 & 6 & 44 & 190 & 812\\
		\(k_{\text{f}}\) & 20  & 2  & 38  & 30 & 506 & 6 & 44 & 190 & 812\\
		\(d\) & 7   & 5  & 9  & 2 & 5 & 3 & 5 &9  & 17\\
		\hline
	\end{tabular}
\end{table}

\section{S-2. Fermionic state teleportation}

We now validate the quantum state teleportation circuit shown in Fig.~3(a) in the main text.
Suppose that our fermionic LDPC code memory $A$ encodes $k_{\text{f}}$ logical complex fermion modes, 
and we want to transfer the logical information of the $j$th logical mode to a processor $B$,
which is initialized in $|\overline{0}\rangle_B$.
Let the logical state of $A$ be $\ket{\overline{\psi}}_A$, 
and the state of the entire system $\ket{\overline{\Psi}}$ can be written as 
\begin{equation} 
	\ket{\overline{\Psi}}=\ket{\overline{\psi}}_A\ket{\overline{0}}_B=\left(\overline{f}+\overline{c}_j^\dagger \overline{g}\right)\ket{\text{vac}},
\end{equation}
where $\overline{c}_{j}^\dagger$ denotes the logical fermionic creation operator of the $j$th logical mode of code $A$, 
$\overline{h} =h 
(\overline{c}_1^\dagger,\dots,\overline{c}_{j-1}^\dagger,
\overline{c}_{j+1}^\dagger,\dots, \overline{c}_{k_{\text{f}}}^\dagger)$ 
with $h=f,g$
represents a polynomial of the logical operators except the $j$th operator, 
and $\ket{\text{vac}}$ is the vacuum state.
We prove that the logical state of the system will be 
$\ket{\overline{\Psi}^\prime}=\left(\overline{f}+\overline{c}_B^\dagger \overline{g}\right)\ket{\text{vac}}$ 
after implementing the quantum state teleportation circuit, 
where $\overline{c}_B^\dagger$ is the encoded logical creation operator of code $B$.

Let $\overline{c}_{j}^\dagger\equiv\frac{1}{2}\left(\overline{\gamma}_{j}+\ii\overline{\gamma}_{j}^\prime\right)$ 
and $\overline{c}_{B}^\dagger\equiv\frac{1}{2}\left(\overline{\gamma}_{B}+\ii\overline{\gamma}_{B}^\prime\right)$, 
the first step of the circuit is to perform the 
projective measurement of the joint parity operator 
$\ii\overline{\gamma}_j\overline{\gamma}_B$, 
resulting in the state proportional to
\begin{align}
	\ket{\overline{\Psi}_1}&\sim\frac{1\pm \ii\overline{\gamma}_j\overline{\gamma}_B}{2}\ket{\overline{\Psi}}\nonumber\\
	&\sim \left(\overline{f}+\overline{c}_j^\dagger \overline{g}\pm \ii\overline{\gamma}_j\overline{\gamma}_B\overline{f}\pm \ii\overline{\gamma}_j\overline{\gamma}_B\overline{c}_j^\dagger \overline{g}\right)\ket{\text{vac}}\nonumber \\
	&\sim \left(\overline{f}+\overline{c}_j^\dagger \overline{g}\pm \ii\overline{\gamma}_j\overline{\gamma}_B\overline{f}\mp \ii\overline{\gamma}_B \overline{g}\right)\ket{\text{vac}},
\end{align}
where the last equation is due to the fact that 
$\overline{\gamma}_j=\overline{c}_j^\dagger+\overline{c}_j$. 
If the measurement result is $+1$,
we proceed;
otherwise, a $\text{Z}$ gate
$\text{Z}=\exp\left(\ii \pi\overline{c}^\dagger_B\overline{c}_B\right)$ 
is applied to the processor to derive the state
\begin{align}
	\ket{\overline{\Psi}_2}&=\exp\left(\ii \pi\overline{c}^\dagger_B\overline{c}_B\right)\ket{\overline{\Psi}_1}\nonumber\\
	&= \left(1-2\overline{c}^\dagger_B\overline{c}_B\right)\ket{\overline{\Psi}_1}\nonumber \\
	&\sim \left(\overline{f}+\overline{c}_j^\dagger \overline{g}+ \ii\overline{\gamma}_j\overline{\gamma}_B\overline{f}- \ii\overline{\gamma}_B \overline{g}\right)\ket{\text{vac}}.
\end{align}
We now perform the projective measurement of the particle number operator $\overline{c}^\dagger_j\overline{c}_j$, 
and the state becomes
\begin{align}
	\ket{\overline{\Psi}_3}&\sim\begin{cases}
		\left(1-\overline{c}^\dagger_j\overline{c}_j\right)\ket{\overline{\Psi}_2}\\
		\overline{c}^\dagger_j\overline{c}_j\ket{\overline{\Psi}_2}
	\end{cases}\nonumber\\
	&= \begin{cases}
		\left(\overline{f}- \ii\overline{\gamma}_B \overline{g}\right)\ket{\text{vac}}\\
		\left(\ii\overline{\gamma}_j\overline{\gamma}_B\overline{f}+\overline{c}_j^\dagger \overline{g}\right)\ket{\text{vac}}
	\end{cases}\nonumber,
\end{align}
for the measurement result of $0$ and $1$, respectively. 
If the measurement result is $0$, 
we simply proceed;
otherwise, we apply the joint operator 
$\ii\overline{\gamma}_j\overline{\gamma}_B$ to get the state
$\ket{\overline{\Psi}_4}=\left(\overline{f}- \ii\overline{\gamma}_B \overline{g}\right)\ket{\text{vac}}=\left(\overline{f}- \ii\overline{c}_B^\dagger \overline{g}\right)\ket{\text{vac}}$. 
After that, we apply a phase rotation gate 
$\overline{S}=\exp\left(\ii \frac{\pi}{2}\overline{c}^\dagger_B\overline{c}_B\right)$
to achieve the target state 
$\ket{\overline{\Psi}^\prime}=\left(\overline{f}+\overline{c}_B^\dagger \overline{g}\right)\ket{\text{vac}}$.

One can easily derive that 
$\ii\overline{\gamma}_j\overline{\gamma}_B=-\overline{Z_j}\,(\overline{\text{Braid}(j,B)})^2$,
where $\overline{\text{Braid}(j,B)}=\frac{1}{\sqrt{2}} (1-\overline{\gamma}_j' \overline{\gamma}_B)$
is the logical braiding gate~\cite{Schuckert2024arxiv}
and $\overline{Z_j}=\ii \overline{\gamma}_j \overline{\gamma}_j'$.
Therefore, $\ii\overline{\gamma}_j\overline{\gamma}_B$ can be
realized by applying two logical braiding gates and one logical $Z$ gate as shown in Fig.~3(a),
which can be realized transversally
as shown in Fig.~\ref{Teleportation}(b).

For two fermion gates, we need to transfer the information 
from two logical fermionic sites in the memory to two processors, 
and apply corresponding logical gates on the processors.

\begin{figure}
\centering 
\includegraphics[width=0.4\textwidth]{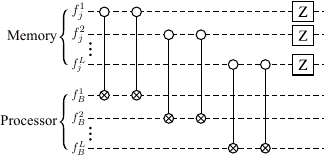}
\caption{
Illustration of implementing \(\ii\overline{\gamma}_j\overline{\gamma}_B\) transversally 
between a memory and a processor. $f_j^1, f_j^2, \dots, f_j^L$ denote the physical fermion 
sites in the memory so that the logical Majorana operators $\overline{\gamma}_j$ and 
$\overline{\gamma}_j'$ are given by product of physical Majorana operators at these
sites (similarly for the processor).
} 
\label{Teleportation}
\end{figure}

\section{S-3. Fermionic lattice surgery}

In the main text, we have introduced two methods to perform 
fermionic lattice surgery.   
In this section, we will analyze their overhead and demonstrate 
the fault-tolerance of the procedures. 

\subsection{A. Method 1}

%\begin{figure}[htbp]
%	\centering 
%	\includegraphics[width=\textwidth]{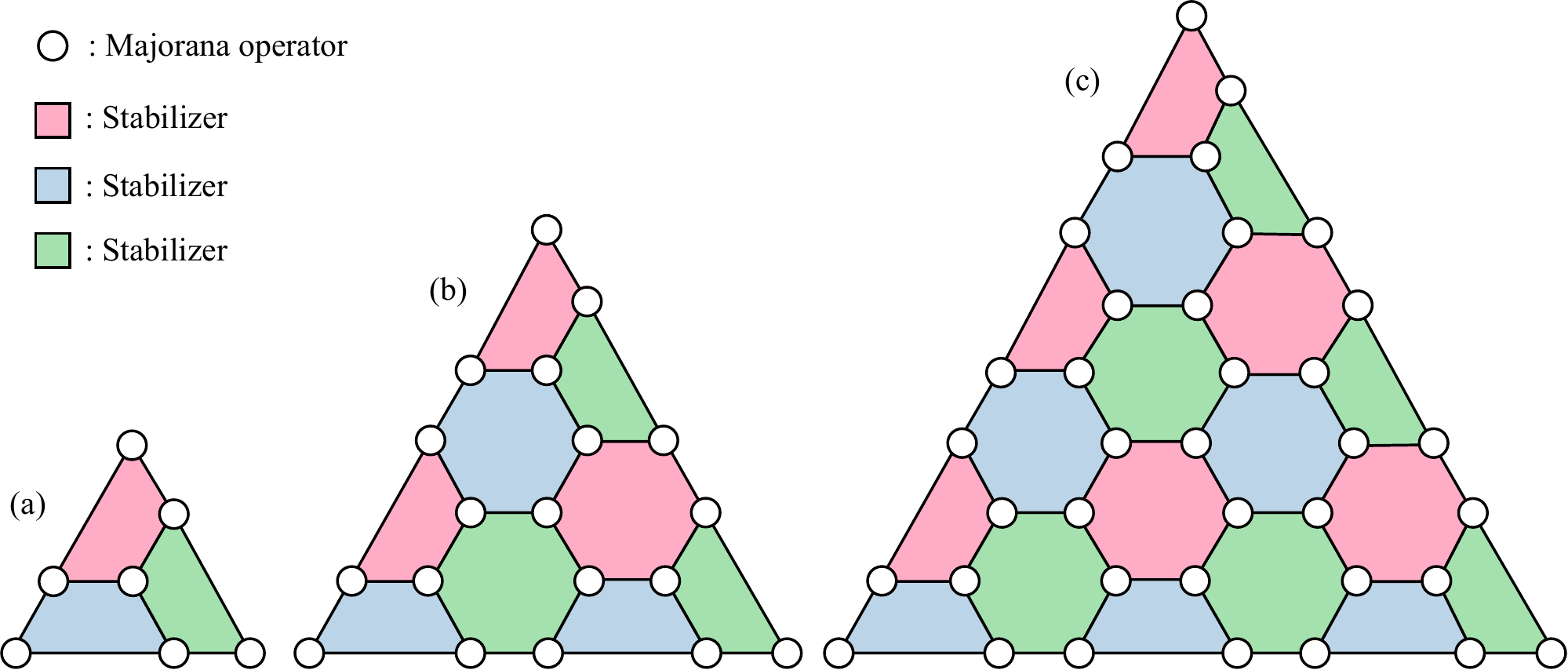}
%	\caption{Triangular fermionic color codes for code distance (a) $d=3$ (fermionic Steane code), (b) $d=5$, and (c) $d=7$. 
%		The logical operator is the product of the $d$ Majorana modes on the bottom boundary,
%		and the stabilizers acting on this logical are represented by the $d-1$ blue and green plaquettes.
%		From left to the right, we index the $d-1$ plaquettes in order. Any triangular color code 
%		with odd code distances $d$ can be similarly constructed~\cite{Bravyi2010njp}.} 
%	\label{figs-color}
%\end{figure}

%%%%%%%%%%%%%%%%%%remains
For the first method, the two logical operators have equal weight, 
i.e., $|\overline{\gamma}_A|=|\overline{\gamma}_B|$.
We now show that the merged code obtained using this method is a well-defined fermionic LDPC code.

First, all the new stabilizers have even weight by direct observation.
We now analyze the weights of the stabilizers and show commutativity between them.
\begin{itemize}
	\item The modified stabilizer $c_{A,i}^{\prime}$ has weight $|c_{A,i}| + m_i$, 
	which is even and remains bounded. It commutes with:
	\begin{itemize}
		\item Any measurement stabilizer $M_n$, 
		since their overlap is either empty or 
		$\left\{ \gamma_{A,n}, \gamma_{a,i,n}^{(\prime)}\right\}$,
		\item All gauge stabilizers, since their overlap is either empty or
		$\left\{ \gamma_{a,i, j_{2n-1}}, \gamma_{a,i, j_{2n}}^\prime\right\}$ for some $n$,
		\item All stabilizers and modified stabilizers in $B$, as their supports are disjoint.
		\item All stabilizers in $A$, which is ensured by the construction of code $A$.
	\end{itemize}
	
	\item The stabilizer $ c_{B,i}^{\prime}$ has a bounded even weight by definition,
	and commutes with measurement and gauge stabilizers by similar reasons for $ c_{A,i}^{\prime}$.
	
	\item The gauge stabilizers have weight four by definition, and commute with measurement stabilizers 
	since their overlap contains either zero or two Majorana modes.
	
	\item Each measurement stabilizer $M_i$ has bounded weight, since the numbers of 
	the modified stabilizers associated with $\gamma_{A,i}$ and $\gamma_{B,i}$ 
	are bounded, and thus contribute a bounded number of ancilla modes to $M_i$.
\end{itemize}
Thus, code $\mathcal{C}_{\text{merged}}$ is a well-defined fermionic LDPC code 
with mutually commuting stabilizers.
Furthermore, the number of ancilla modes is bounded by 
$4(d-1)+\sum_{i=1}^{N_A}4m_i$, 
resulting in a low overhead for the lattice surgery.
We leave the analysis of capacity of fault-tolerance to the next section.

\subsection{B. Method 2}
This method is applicable to both cases with
$|\overline{\gamma}_A|=|\overline{\gamma}_B|$ and 
$|\overline{\gamma}_A|\neq|\overline{\gamma}_B|$
and when code $B$ is either a fermionic color code
or a fermionic LDPC code. Pseudo-code is given in Alg.~\ref{alg:LatticeSurgeryTwo}
for this method.

We now discuss how to identify gauge stabilizers from a graph theory perspective, 
as illustrated in Fig.~\ref{figs-cycle}. A graph is constructed as follows:
\begin{itemize}
	\item The modified stabilizers (pink and cyan circles), measurement 
	stabilizers (yellow circles) and ancilla Majorana operators (blue circles) are treated as vertices. 
	An edge between a measurement stabilizer vertex and an ancilla mode vertex is added 
	if the corresponding measurement stabilizer contains the corresponding ancilla Majorana operator
    (see Fig.~\ref{figs-cycle}(a)).	
	\item If a modified stabilizer connects to more than one pair of ancilla modes, 
	then we add more modified stabilizer vertices and adjust connections so that 
	each modified stabilizer vertex connects to one pair of ancilla modes (see Fig.~\ref{figs-cycle}(b)). 
	\item For each pair of ancilla Majorana operators that are introduced to 
	ensure even-weight measurement stabilizers (e.g., node 3 and 15 in Fig.~\ref{figs-cycle}),
	we add an endpoint vertex between them and connect this vertex to the two 
	ancilla mode vertices (see Fig.~\ref{figs-cycle}(b)). 
	\item We refer to the vertices corresponding to measurement stabilizers as measurement vertices,
	the vertices corresponding to ancilla Majorana operators as ancilla vertices, 
	and the modified stabilizers and the vertices added in the previous two steps as modified vertices.
\end{itemize}
Therefore, this graph contains $D_{AB}=\frac{D_A+D_B}{2}$ measurement vertices,
$\sum_{i=1}^{N_A} m_{A,i} + \sum_{i=1}^{N_B} m_{B,i} + N_P$ ancilla vertices,
$\sum_{i=1}^{N_A} \frac{1}{2}m_{A,i} + \sum_{i=1}^{N_B} \frac{1}{2} m_{B,i} + \frac{N_P}{2}$ modified vertices, 
and $2(\sum_{i=1}^{N_A} m_{A,i} + \sum_{i=1}^{N_B} m_{B,i} + N_P)$ edges, since each ancilla vertex 
connects two vertices.
Since each measurement vertex has even degree, each modified vertex has degree two,
and any two vertices can be connected by some path,
this graph is an undirected Eulerian graph with a single connected component,
indicating that it contains 
\begin{equation}
	|E| - |V| + 1 = \sum_{i=1}^{N_A} m_{A,i}/2 + \sum_{i=1}^{N_B} m_{B,i}/2 + N_P/2-D_{AB} + 1
\end{equation}
independent cycles~\cite{gross2018graph}. Here $|E|$ and $|V|$ denote the number of
edges and vertices, respectively. The gauge stabilizer is then defined as 
the product of all ancilla Majorana operators on each independent cycle 
(see Fig.~\ref{figs-cycle}(c) and (d)).

The commutativity between measurement stabilizers and modified 
stabilizers is ensured by construction.
We now verify that the gauge stabilizers are even-weight operators,
and commute with measurement and modified stabilizers.

\begin{itemize}
	\item Each path consists of a set of ancilla vertices, which are interlaced with
	measurement vertices and modified vertices. We note that each modified vertex 
	connects exclusively to a pair of ancilla vertices on a cycle and each ancilla vertex connects 
	one modified vertex, as depicted in Fig.~\ref{figs-cycle}(c). Therefore, the number of ancilla vertices
	in each cycle is twice the number of modified vertices, 
	ensuring that gauge stabilizers are even-weight operators.
	\item The number of overlapped ancilla vertices between a gauge 
	stabilizer and a modified stabilizer is even, and
	thus each gauge stabilizer commutes with all modified stabilizers.
	\item The number of ancilla vertices shared by a gauge stabilizer 
	and a measurement stabilizer is always even. 
	This is because, within the cycle of a gauge stabilizer, traversing a measurement vertex requires entering and exiting through two distinct ancilla vertices, 
	as shown in Fig.~\ref{figs-cycle}(c) and (d).
	\item The gauge stabilizers commute with each other. If two cycles intersect at a vertex $\gamma_{a(b),i,j_1}$,
	they must pass through the modified vertex associated with $\gamma_{a(b),i,j_1}$. Consequently, these
	cycles must also intersect at the other vertex $\gamma_{a(b),i,j_2}^\prime$ associated with this modified vertex,
	resulting in an even overlap between the two gauge stabilizers. 
\end{itemize}

\begin{figure}
	\centering 
	\includegraphics[width=\textwidth]{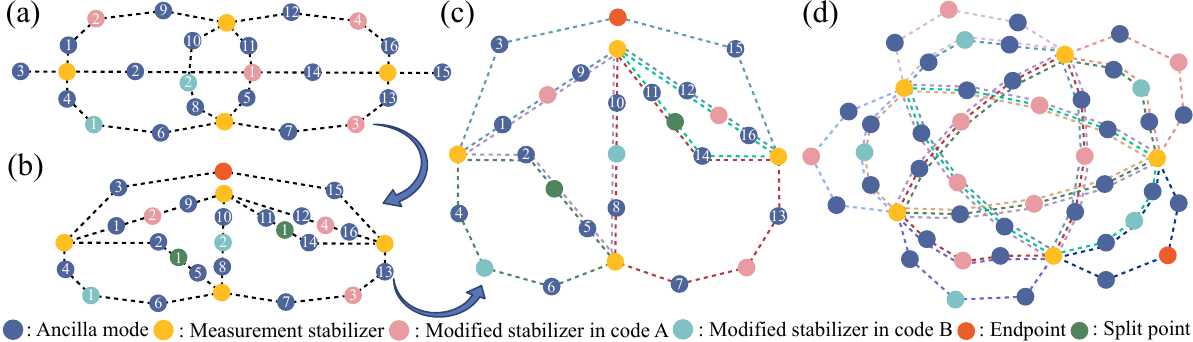}
	\caption{Illustration of construction of gauge stabilizer generators 
		in a general fermionic lattice surgery using method 2. (a) An initial graph. 
		(b) A graph obtained by splitting modified stabilizers via adding more vertices,
		called split points, and by
		adding an endpoint that connects the ancilla Majorana operators introduced to 
		ensure even-weight measurement stabilizers (node 3 and node 15). 
		(c) Independent cycles, each represented by different colored closed paths. 
		These figures correspond to the case shown in Fig.~6 in the main text.
		(d) Independent cycles for the case of performing fermionic lattice 
		surgery between $\text{PG}(2,8)$ and $d=5$ fermionic color code.} 
	\label{figs-cycle}
\end{figure}

We now demonstrate that the merged code $\mathcal{C}_{\text{merged}}$ is a
fermionic LDPC code with a low overhead. 
The number of ancilla fermions introduced is $\sum_i m_{A,i} + \sum_i m_{B,i} + N_P$, 
which is bounded by $\sum_i m_{A,i} + \sum_i m_{B,i} + D_{AB}$. 
This corresponds to a low-overhead construction, 
making this lattice surgery practical.
The weight of measurement stabilizers 
is bounded due to the limited number of original stabilizers associated with each site.
For the weight of gauge stabilizers,
we numerically compute the maximum weight of the gauge stabilizers for 1000 
randomly generated code pairs ($A$ and $B$). Our results show that this weight 
scales as $[(|\overline{\gamma}_A|+|\overline{\gamma}_B|)/2]^{0.5}$, 
as illustrated in Fig.~\ref{figs-weight}.

\begin{figure}[htbp]
	\centering 
	\includegraphics[width=0.35\textwidth]{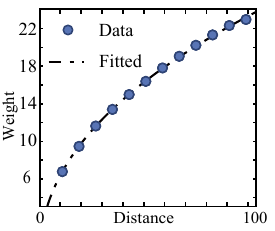}
	\caption{Numerically computed maximum weight of the gauge stabilizers 
		for 1000 randomly generated code pairs ($A$ and $B$) with respect to the average weight of 
		$\overline\gamma_A$ and $\overline\gamma_B$.
		The dotted-dashed line is power-law fit of the data. } 
	\label{figs-weight}
\end{figure}

The fault-tolerant measurement of the joint logical operator 
$\ii\overline{\gamma}_A\overline{\gamma}_B$ is due to the fact 
that this joint operator becomes a stabilizer of the merged code.
To show this, we need to demonstrate that there exists a subset of gauge stabilizers,
such that the product of these gauge stabilizers incorporates all ancilla Majorana operators,
since the product of all measurement stabilizers contains $\overline{\gamma}_A\overline{\gamma}_B$
and all ancilla Majorana operators.
The product of all ancilla modes corresponds to a big cycle containing all ancilla vertices,
which can always be decomposed into a subset of the independent cycles found above.
As a result, the product of all measurement stabilizers and this subset of gauge stabilizers
leads to $\ii\overline{\gamma}_A\overline{\gamma}_B$.
The measurement result of this operator is determined by the product of the measurement 
results of all measurement stabilizers and this subset of gauge stabilizers,
In fact, only measurement results of measurement stabilizers are necessary since 
they are random due to their anticommutation with some original stabilizers in blocks 
$A$ and $B$, while the gauge stabilizer results
are always $+1$, since they belong to the stabilizer group of the original code.

In Fig.~\ref{figs-cycle}(d), we show an example depicting the results of 
independent cycle search for performing 
fermionic lattice surgery between $\text{PG}(2,8)$ and a $d=5$ fermionic color code.

\begin{algorithm}[H]
	\caption{Fermionic Lattice Surgery}
	\label{alg:LatticeSurgeryTwo}
	\SetAlgoLined
	\KwIn{
		Codes $A$, $B$ with stabilizers $\mathcal{S}_A$, $\mathcal{S}_B$, Logical fermionic site supports $F_A$, $F_B$
	}
	\KwOut{Merged code $\mathcal{C}_{\text{merged}}$}

	$\mathcal{A}$ (ancilla mode set) $\gets \emptyset, \mathcal{S}_{\text{fix}} \gets \{S \in \mathcal{S}_A \cup \mathcal{S}_B | \text{supp}(S) \cap (F_A \cup F_B) \neq \emptyset\}, \mathcal{S}_{\text{free}} \gets (\mathcal{S}_A \cup \mathcal{S}_B) \setminus \mathcal{S}_{\text{fix}}$ \\
	\For{$S \in \mathcal{S}_{\text{fix}}$}{
		\For{$i=1$ \KwTo $\lfloor\frac{1}{2} |\text{supp}(S)\cap(F_A \cup F_B)| \rfloor$}{
			introduce new ancilla site $a$, $\mathcal{A} \gets \mathcal{A} \cup \{a\}, S \gets S \ii\gamma_a \gamma_a^\prime$\\
		}
	}
	$\mathcal{M}$ (measurement stabilizers) $\gets \emptyset$ \\
	\ForEach{$i \in 1:\min(|F_A|,|F_B|)$}{
		$\text{flag}\gets \text{False}, M_i \gets\ii\gamma_{F_A[i]}\gamma_{F_B[i]}$, Attach relevant ancillas to $M_i$ without reuse\\
		\If{$\text{wt}(M_i)$ is odd}{
			\eIf{$\text{flag}$ is $\text{False}$}{
				introduce new ancilla site $a$, $\mathcal{A} \gets \mathcal{A} \cup \{a\}, M_i\gets M_i\gamma_a, \text{flag}\gets \text{True}$ \\
			}
			{
				$M_i\gets M_i\gamma_a', \text{flag}\gets\text{False}$\\
			}
		}
		$\mathcal{M} \gets \mathcal{M} \cup \{M_i\}$\\
	}
	Add pair operators for unpaired fermions\\
	$G \gets \text{graph}(\mathcal{A})$, add edges for co-occurring ancillas in $\mathcal{S}_{\text{fix}} \cup \mathcal{M}$ \\
	$\mathcal{G}$ (gauge stabilizers) $\gets \{\prod_{v\in C} \gamma_v | C \in \text{Independent cycles}(G)\}$ \\
	\Return $\mathcal{C}(\mathcal{S}_{\text{free}} \cup \mathcal{S}_{\text{fix}} \cup \mathcal{M} \cup \mathcal{G})$
\end{algorithm}

\subsection{C. Fault-tolerance of fermionic lattice surgery}

In this subsection, we will analyze the fault-tolerance of fermionic lattice surgery
using the subsystem code formalism~\cite{vuillot2019code,aly2006subsystem}. 
The procedure's fault-tolerance is determined 
by the distance of the dressed logical operators, which will be defined shortly.
For method 2 with a processor being a color code, we show that the distance of the dressed 
logical operators is at least the minimum weight of the original logical operators. 
For method 1, we provide numerical 
evidence demonstrating that the dressed logical distance is no smaller 
than the original logical distance.

A subsystem code is defined on $2n$ Majorana fermions by a gauge group
$\mathcal{G}$, which is generated by a subset of Majorana string operators
$\mathcal{G}\subset \text{Maj}(2n)$.
The stabilizer group of $\mathcal{G}$ is given by
$\mathcal{Z}(\mathcal{G})=\mathcal{C}(\mathcal{G})\cap \mathcal{G}$,
where $\mathcal{C}(\mathcal{G})$ is the centralizer group of $\mathcal{G}$.
We define the bare logical operators as 
$\mathcal{L}_{\text{bare}}=\mathcal{C}(\mathcal{G})\setminus \mathcal{G}$,
the gauge operators as 
$\mathcal{L}_g=\mathcal{G}\setminus\mathcal{Z}(\mathcal{G})$,
and the dressed logical operators as 
$\mathcal{L}_{\text{dressed}}=\left\{ L| L= g l,\  g\in\mathcal{L}_g,  l\in\mathcal{L}_{\text{bare}}\right\}$. 
The distance of the dressed logicals is defined as the minimum weight of all dressed logicals~\cite{vuillot2019code}.

In our lattice surgery construction for method 2,
the gauge group is defined as
$\mathcal{G}=\langle \mathcal{S}_{\text{old}}\cup\mathcal{S}_{\text{merged}}\cup(\prod_j\ii\overline{\gamma}_j\overline{\gamma}_j^\prime)\overline{\gamma}_B\rangle$,
where $\mathcal{S}_{\text{old}}$ denotes the union of stabilizer groups of codes $A$ and $B$ before merging,
$\mathcal{S}_{\text{merged}}$ is the stabilizer group of the merged code,
and $\prod_j\ii\overline{\gamma}_j\overline{\gamma}_j^\prime$
is the product of all logical Majorana operators in codes $A$ and $B$ before merging.
When $D_A=D_B$ ($D_A\equiv |\overline{\gamma}_A|$ and $D_B \equiv |\overline{\gamma}_B|$), 
the stabilizer group of the subsystem code is 
$\left\langle \mathcal{S}_{\text{old}}\cap\mathcal{S}_{\text{merged}}\right\rangle$,
the gauge operators are $\langle (\prod_j\ii\overline{\gamma}_j\overline{\gamma}_j^\prime)\overline{\gamma}_B,  c_{B,1},\dots,  c_{B,D_B-1}, M_1,\dots, M_{D_B} \rangle$, 
and the bare logicals consist of all logical operators in codes $A$ and $B$ 
except $\overline{\gamma}_A$ and $\overline{\gamma}_B$. We note that
this construction of gauge group does not apply to the fermionic 
lattice surgery using method 1, as more operators need to be included in the gauge group~\cite{vuillot2019code}.

We now prove that the minimum weight of the dressed logicals is at least $\min\{d_A,d_B\}$,
where $d_A$ and $d_B$ are the code distances of a fermionic LDPC code $A$ and color code $B$, respectively.
Consider an arbitrary dressed logical $ L= g l$ 
with $ g\in\mathcal{L}_g$ and $ l\in\mathcal{L}_{\text{bare}}$,
we analyze its weight $| L|$ according to the support of $ l$.

If $\text{Supp}( l)\subset B$, where $\text{Supp}(\dots)$ denotes the set
of physical fermionic sites supported by the operator,
then $ l \sim\overline{\gamma}_B^\prime$. 
Since $( g)_B$ act only on $\gamma$-type Majorana modes,
where $(\dots)_B$ denotes the operator supported on the Majorana modes of $B$,
the weight $| l|$ cannot be reduced below $D_B$,

For the case with $\mathrm{Supp}( l)\subset A$, 
we decompose $g$ as 
$ g= l^\prime m c_B a$, 
where $ m\in \langle M_1,\dots, M_{D_B} \rangle$, 
$ c_B\in\langle  c_{B,1},\dots,  c_{B,D_B-1}\rangle$,
$ l^\prime\in\{\overline{\gamma}_A, I\}$, 
and $ a\in\{(\prod_j\ii\overline{\gamma}_j\overline{\gamma}_j^\prime)\ii\overline{\gamma}_B\overline{\gamma}_A,I\}$
(here if $l^\prime=\overline{\gamma}_A$, then 
$a=(\prod_j\ii\overline{\gamma}_j\overline{\gamma}_j^\prime)\ii\overline{\gamma}_B\overline{\gamma}_A$;
otherwise, both are equal to the identity operator).
Then,
\begin{equation}
\begin{aligned}
| L|&=| l l^\prime m c_B a|\\
&\geq|( l l^\prime m c_B a)_A|+|( l l^\prime m c_B a)_B|\\
&=|( l l^\prime m a)_A|+|( m c_B)_B|+|( a)_B|\\
&\geq|( l l^\prime m a)_A|+|( m)_B|+|( a)_B|\\
&=|( l l^\prime m a)_{A\cup B}|\\
&\geq| l l^\prime a| \\
&\geq\min\{d_A, d_B\}.
\end{aligned}
\end{equation}
The third line holds because $ l l^\prime$ is supported on $A$, 
$( m c_B)_B$ is supported on $\gamma$-type modes, 
and $( a)_B$ is supported on $\gamma^\prime$-type modes.
For the fourth line, if $( m)_B\sim \overline{\gamma}_B$,
$|(m)_B c_B| \ge |(m)_B|$ since $|(m)_B|$ is the distance of the color code.
Otherwise, since $\text{Supp}(( m)_B)\subseteq \text{Supp}(\overline{\gamma}_B)$,
we write $(m)_B=\overline{\gamma}_B o_B$ where 
$\text{Supp}(o_B) \subseteq \text{Supp}(\overline{\gamma}_B)$ and
$\text{Supp}(o_B) \cap \text{Supp}((m)_B)=\emptyset$.
Thus, $c_B ( m)_B=c_B \overline{\gamma}_B o_B=\overline{\gamma}_{B,2} o_B$,
where $\overline{\gamma}_{B,2}=c_B \overline{\gamma}_B$ is still a logical operator of the color code, whose 
weight cannot be smaller than $|\overline{\gamma}_B|$. Let us further write 
$o_B\sim \gamma_{B,i_1}\dots \gamma_{B,i_r} \gamma_{B,i_{r+1}} \dots \gamma_{B,i_{j}}$
where $0\le r \le j \le D_B$, 
$i_1,\dots,i_r \in \text{Supp}(\overline{\gamma}_{B,2})$, and
$i_{r+1},\dots,i_{j} \not\in \text{Supp}(\overline{\gamma}_{B,2})$.
Thus,
$|\overline{\gamma}_{B,2} o_B|=|\overline{\gamma}_{B,2}|+j-2r 
\ge |\overline{\gamma}_B|+j-2r=|(m)_B|+2(j-r) \ge |(m)_B|$.
For the sixth line, if $a l^\prime \sim I$, then 
$|( l l^\prime m a)_{A\cup B}|=|( l m )_{A\cup B}|$. We decompose $l$ as 
$l=l_0 \gamma_{A,j_1}\dots \gamma_{A,j_s}$ where $l_0$ does not contain any operator
in $\{\gamma_{A,1},\dots,\gamma_{A,D_A}\}$.
We also have
$(m)_{A\cup B}\sim \gamma_{A,i_1}\gamma_{B,i_1}\dots \gamma_{A,i_p}\gamma_{B,i_p}$,
where $0\le p \le D_B$ and $ 1 \le i_1,\dots,i_p \le D_B$.
Thus, multiplying $m$ to $l$ only changes $\gamma_{A,i}$ in $l$ to $\gamma_{B,i}$ if 
it is present in $m$. Thus, $|( l m )_{A\cup B}|\ge |l| \ge \min\{d_A,d_B\}$.
Similarly, when $l^\prime a \sim (\prod_j \overline{\gamma}_j\overline{\gamma}_j^\prime)\overline{\gamma}_B$,
we have $|( l l^\prime m a)_{A\cup B}| \ge |l l^\prime a|$. It follows that 
$|l l^\prime a| \ge |(a)_B|\ge \min\{d_A,d_B\}$.

If $l$ contains operators in both codes $A$ and $B$, then $l\sim \overline{\gamma}_B^\prime l_A$
where $\mathrm{Supp}(l_A)\subset A$. Then,
$|L|=|gl|=|\overline{\gamma}_B^\prime|+|gl_A| \ge |\overline{\gamma}_B^\prime|+\min\{d_A,d_B\}$.

For method 1 for fermionic lattice surgery,
we numerically calculate the distance of the dressed logical operators,
and the results are summarized in the table below. We find that the dressed 
logical distance is consistently no smaller than the original logical distance.

\begin{table}[htbp]
	\centering
	\begin{tabular}{|c|c|c|c|c|c|c|}
		\hline
		Index & 1 & 2 & 3 & 4 & 5 & 6\\
		\hline
		Code A & Steane code & $d = 5$-Color code & $\text{PG}(2,4)$ & $\text{PG}(2,8)$ & $\text{DB}(5)$ & $\text{DB}(11)$ \\
		Code B & $d = 5$-Color code & $d = 5$-Color code & $d = 5$-Color code & $d = 5$-Color code & $d = 5$-Color code & $d = 7$-Color code\\
		Distance & $3$ & $5$ & $3$ & $5$ & $4$ & $7$ \\
		\hline
	\end{tabular}
	\caption{Distances of subsystem codes during fermionic lattice surgery 
		using method 1. $\text{DB}$ and $\text{PG} $ denote the double-chain 
		bicycle code and projective geometry code, respectively. $\text{DB}(5)$ 
		is the double-chain bicycle code with the code parameter $[[20,4,4]]_\text{f}$ 
		generated by polynomials $1+x^3$ and $x^2+x^3$~\cite{liu2026stacked}. 
		$\text{DB}(11)$ is the double-chain bicycle code with the 
		code parameter $[[44,4,7]]_\text{f}$ generated by polynomials 
		$1+x^{10}$ and $x^2+x^5$~\cite{liu2026stacked}.}
\end{table}

\section{S-4. Details of numerical simulations}

In this section, we will provide more details regarding calculations of logical failure rate 
of the fermionic LDPC memory and simulations of fermionic circuits discussed in the main text.

\subsection{A. Noisy fermion memory}
\subsubsection{Error models}

\begin{figure}[t]
	\centering
	\includegraphics[width=\linewidth]{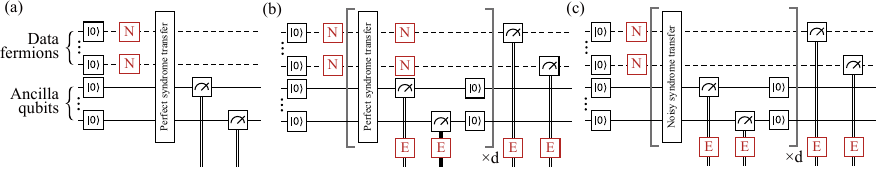}
	\caption{Illustration of three error models: (a) code-capacity, 
		(b) phenomenological, and (c) circuit-level error models.
		Here, the syndrome extraction circuit corresponds to 
		the syndrome extraction transfer used to establish entanglement 
		between data fermions and ancilla qubits, followed by measurements 
		on the ancilla qubits.
		In (a) and (b), the syndrome extraction transfer is noiseless, while 
		in (c), it is noisy as shown in Fig.~\ref{SEcircuit}.
		In all three cases, the initial states are physical $|0\rangle$ states for fermions.
		In (b) and (c), $d$ rounds of syndrome extractions are applied and,
		a transversal measurement on all 
		physical fermion modes in the particle-number basis is performed in the end.
	}
	\label{ErrorModel}
\end{figure}

\begin{figure}[t]
	\centering
	\includegraphics[width=0.7\linewidth]{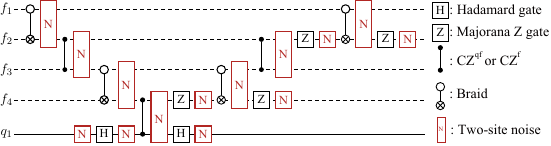}
	\caption{Noisy fermionic syndrome extraction 
		circuit for measuring the stabilizer $-\gamma_1\gamma_2\gamma_3\gamma_4$ 
		at four fermionic sites assisted by a qubit~\cite{Schuckert2024arxiv}.
	}
	\label{SEcircuit}
\end{figure}

To evaluate the performance of the fermion-to-fermion LDPC code, we 
consider three types of error models, similar to the qubit case: code-capacity,
phenomenological and circuit-level error models, as shown in Fig.~\ref{ErrorModel}~\cite{vasic2025quantum}. 

\begin{figure}[t]
	\centering
	\includegraphics[width=0.5\linewidth]{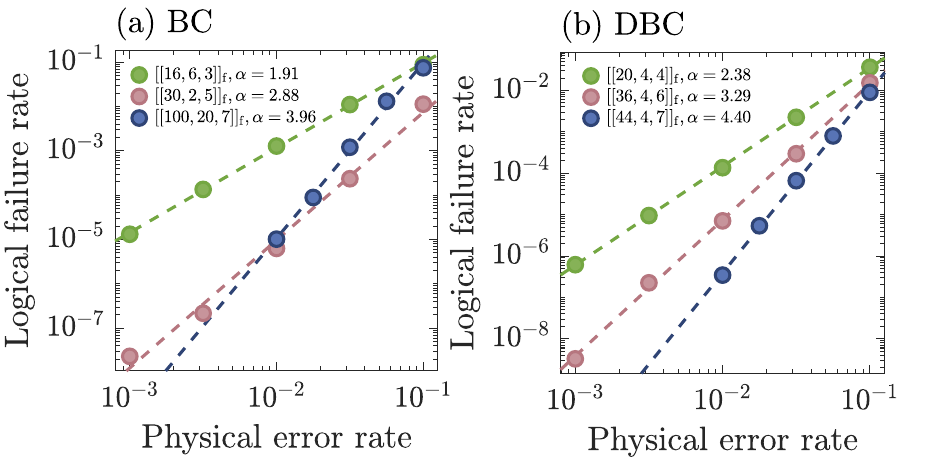}
	\caption{Logical failure rate with respect to physical error rate for the fermionic $\text{PG}(2,4)$ 
		($[[16,6,3]]_{\mathrm{f}}$) code, the fermionic
		$\text{EG}(2,4)$ ($[[30,2,5]]_{\mathrm{f}}$) code, the fermionic $[[100,20,7]]_{\mathrm{f}}$ bicycle code in (a), and 
		for three double-chain bicycle codes in (b), under code-capacity noise model. Here,
		the BP+OSD decoder is used. 
		The error bars are hidden behind the data points. The dash-dotted lines---similarly 
		depicted in Fig.~\ref{FigPhen}, Fig.~\ref{FigCircuit}, and Fig.~\ref{StackedCodes}---are the linear
		fit of the four smallest data points.}
	\label{FigCap}
\end{figure}

The code-capacity error model considers a single round of errors on fermion sites 
after initialization with noiseless stabilizer measurements, as illustrated in Fig.~\ref{ErrorModel}(a). 
This model represents the upper bound of code performance concerning these errors.
For these errors,
similar to the commonly used one-qubit depolarizing noise channel in qubits~\cite{fowler2009high},
we consider its fermionic version (Majorana depolarizing noise channel) at each fermionic site, 
described by the following Kraus operators for a system 
with $2n$ Majorana modes:
 \begin{equation}\label{SingleBodyFermionicDepolarization}
 	\mathcal{E}_{\text{Maj}}^{(1)} = \left\{\sqrt{1-p} I_j, \sqrt{\frac{p}{3}}\gamma_j, \sqrt{\frac{p}{3}}\gamma'_j, 
 	\sqrt{\frac{p}{3}}\ii \gamma_j \gamma'_j \right\},
 \end{equation}
where $p/3$ denotes the probability of occurrence of either $\gamma_j$, $\gamma'_j$, 
or $\gamma_j \gamma'_j$ at fermionic site $j$.
Figure~\ref{FigCap} shows the logical failure rate with respect to the physical error rate 
for the finite projective geometry $\text{PG}(2,4)$ code, the finite Euclidean geometry 
$\text{EG}(2,4)$ code, the $[[100,20,7]]_{\mathrm{f}}$ bicycle code, and 
three double-chain bicycle codes. 
We clearly see that the logical failure rate is significantly suppressed compared to 
the physical one according to the power law 
$p_{\mathrm{L}} \sim p^\alpha$ with $\alpha \approx \lceil \frac{d}{2} \rceil$.  

\begin{figure}[t]
	\centering
	\includegraphics[width=\linewidth]{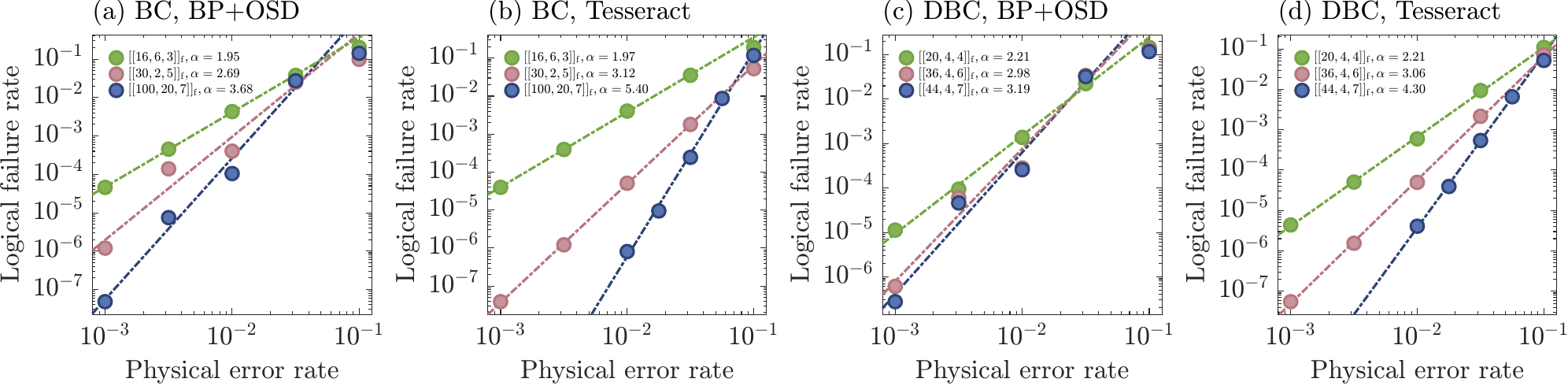}
	\caption{Logical failure rate as a function of physical error rate for the fermionic $\text{PG}(2,4)$ ($[[16,6,3]]_{\mathrm{f}}$) code, the 
		fermionic $\text{EG}(2,4)$ ($[[30,2,5]]_{\mathrm{f}}$) code, the fermionic $[[100,20,7]]_{\mathrm{f}}$ bicycle code in (a) and (b), and 
		for three double-chain bicycle codes in (c) and (d), under phenomenological noise model.
		In (a) and (c), the BP+OSD decoder is used, whereas in (b) and (d), 
		the Tesseract decoder is employed.
		The error bars are hidden behind the data points. }
	\label{FigPhen}
\end{figure}

Compared with the code-capacity error model, the phenomenological error model
accounts for syndrome extraction errors, as shown in Fig.~\ref{ErrorModel}(b).
We consider $d$ rounds of syndrome extractions, with each extraction followed by a round of
single-fermion-site depolarizing noise channels.
As the phenomenological error model does not include 
the specifics of the syndrome extraction circuit, it does not involve 
error propagation through entangling gates.
Since decoders have a significant impact on the codes' performance, we 
here consider two decoders: the belief propagation and ordered-statistical decoding (BP+OSD) 
algorithm~\cite{roffle2022ldpc} and the 
Tesseract decoder~\cite{beni2025tesseractdecoder} (see the following
subsection for detailed discussion on decoders). 
Figure~\ref{FigPhen} illustrates the logical failure rate as a function of the 
physical error rate for the phenomenological error model for those codes
using the two decoders. We see that  
the logical failure rate remains significantly suppressed compared to 
the physical one. While the results from both decoders generally
follow the scaling $p_{\mathrm{L}} \sim p^\alpha$,
for most analyzed codes, the fitting error for the former decoder is larger than that
of the latter, and the exponent $\alpha$ is smaller than $\lceil \frac{d}{2} \rceil$
for the former, whereas it is larger than $\lceil \frac{d}{2} \rceil$ for the latter.
We attribute the suboptimal performance of the BP+OSD decoder to the presence 
of numerous short cycles with length $l<6$ in the Tanner graph used for decoding~\cite{roffle2022ldpc}; 
for instance, phase errors $i\gamma \gamma'$ can lead to loops of length $l=4$.

For the circuit-level simulation, we also consider the errors occurring 
in the syndrome extraction transfer circuit, in addition to measurement errors and a round of 
one-fermion-site depolarizing noise channels immediately following 
preparation, as shown in Fig.~\ref{ErrorModel}(c).
In the syndrome extraction transfer circuit,
a noise channel, parameterized by the error rate $p$, is applied immediately after
each operation, including fermion, qubit, fermion-qubit, and reset operations 
(see Fig.~\ref{SEcircuit}).
For one-site operations, we consider the one-qubit or one-fermion-site depolarizing 
noise depending on whether the operation is applied to 
an ancilla qubit or a fermion site.
For two-site operations, we consider the two-site depolarizing noise.
For example, when applied to two fermion sites, 
the two-fermion-site depolarizing noise is used: 
\begin{equation}\label{TwoBodyFermionicDepolarization}
	\mathcal{E}_{\text{Maj}}^{(2)}= \{\sqrt{1-p}I_1I_2,\sqrt{\frac{p}{15}}I_1\gamma_2,\sqrt{\frac{p}{15}}I_1\gamma'_2,\dots,\sqrt{\frac{p}{15}}\gamma_1\gamma'_1\gamma_2\gamma'_2\}.
\end{equation}
If the two-site gate involves a qubit, then
we consider a two-site mixed depolarizing noise obtained by replacing $\gamma_2$, $\gamma'_2$, and $\gamma_2 \gamma'_2$
in the above channel by $X$, $Y$, and $Z$, respectively.

Figure~\ref{FigCircuit} shows the results considering the circuit noise model.
We see that the fitted $\alpha$ is significantly reduced for the $\text{PG}(2,4)$ and 
$\text{EG}(2,4)$ codes compared to the results under the phenomenological error model.
For the $[[100,20,7]]_{\mathrm{f}}$ bicycle code, 
the BP+OSD decoder reduces $\alpha$ to $2.27$; in contrast, 
the Tesseract decoder maintains it at $3.74$. Similarly, for the double-chain bicycle codes, 
the Tesseract decoder demonstrates superior performance, as indicated by 
$\alpha$ close to $\lceil \frac{d}{2} \rceil$
and the pseudo-threshold surpassing $0.1\%$ for the $[[36, 4, 6]]_{\text{f}}$
and $[[44, 4, 7]]_{\text{f}}$ codes.

\begin{figure}[t]
	\centering
	\includegraphics[width=\linewidth]{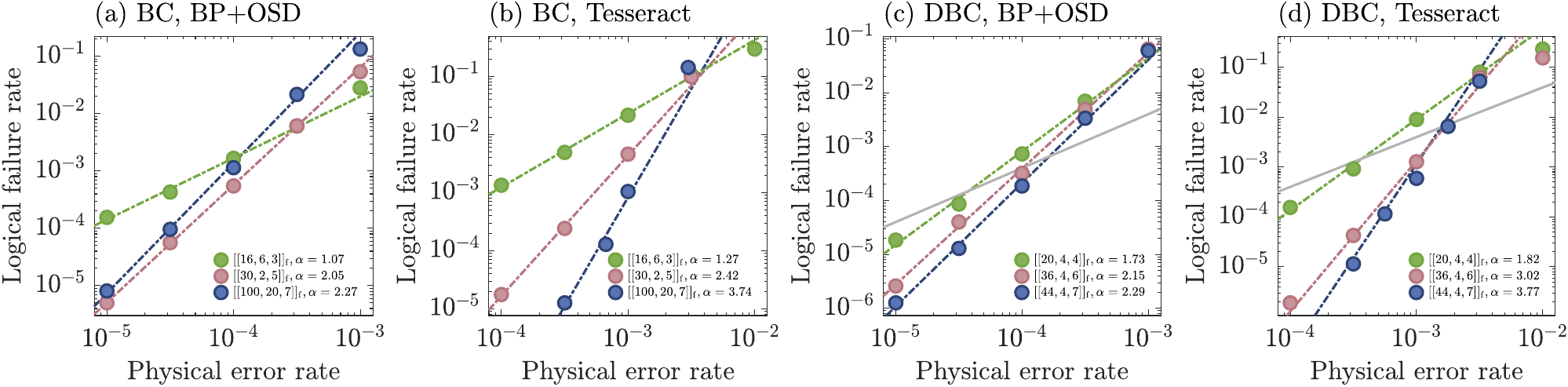}
	\caption{Logical failure rate with respect to physical error rate for the $\text{PG}(2,4)$ ($[[16,6,3]]_{\mathrm{f}}$) code, the 
		$\text{EG}(2,4)$ ($[[30,2,5]]_{\mathrm{f}}$) code, the $[[100,20,7]]_{\mathrm{f}}$ bicycle code in (a) and (b), and 
		for three double-chain bicycle codes in (c) and (d), under circuit noise model.
		In (a) and (c), the BP+OSD decoder is used, whereas in (b) and (d), 
		the Tesseract decoder is employed.
		The error bars are hidden behind the data points. The grey lines in (c) and (d)
		represent the probability that an error occurs on at least one physical site.}
	\label{FigCircuit}
\end{figure}

In Fig.~\ref{StackedCodes}, we provide the simulation results for 
the fermionic double-layer bivariate bicycle (BB) codes,
double-layer twisted BB codes, and double-layer reflection codes under 
circuit-level noise model using the Tesseract decoder. 
We see that most codes exhibits the pseudo-threshold exceeding $0.1\%$.

\begin{figure}[htbp]
	\centering
	\includegraphics[width=0.75\textwidth]{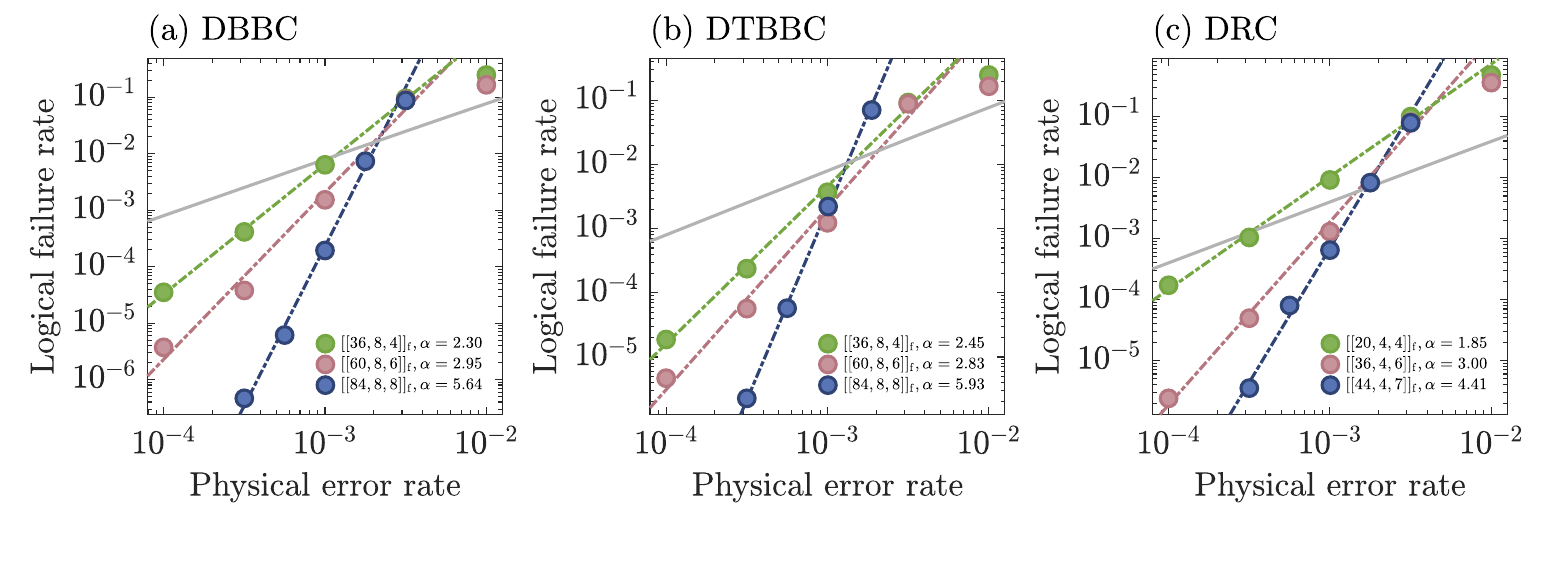}
	\caption{Logical failure rate versus physical error rate obtained 
		under circuit-level error model using the Tesseract decoder for (a) fermionic double-layer BB codes, 
		(b) fermionic double-layer twisted BB codes, and (c) fermionic double-layer reflection codes.
	The grey lines
	represent the probability that an error occurs on at least one physical site.}
	\label{StackedCodes}
\end{figure}

In numerical calculations, similar to Python package stim, when decoders can give the correct 
prediction value for logical operators after performing $N_c=d$ syndrome measurement cycles 
(we set $N_c=d$ for the phenomenological and circuit-level error models), 
it is recorded as a correct event; otherwise, it is recorded as a failure.
The probability of measurement error is on the same order of magnitude as the physical 
error rate; we set the probability of measurement error to $p/3$, where $p$ represents the physical error rate.
The logical error probability is given by $P_L=\frac{N_{\text{failure}}}{N_{\text{sample}}}$,
where $N_{\text{failure}}$ is the number of erroneous trials over total
$N_{\text{sample}}$ trials.
By convention, the logical failure rate is defined as 
$p_L(p)=1 - (1 - P_L)^{1/N_c} \approx P_L/N_c$,
where $P_L$ depends on the physical error rate $p$~\cite{xu2024constant,bravyi2024high}. 
The standard deviation of $P_L$ is $\sigma_{P_L}=\sqrt{\frac{1}{N_{\text{sample}}}P_L(1-P_L)}$
so that the standard deviation of $p_L$ plotted as error bars in the 
numerical simulations of logical failure rate is 
$\sigma_{p_L}=(1/N_c)(1-P_L)^{(1/N_c)-1}\sigma_{P_L}$~\cite{xu2024constant,bravyi2024high}.
We apply a self-adjusting strategy to determine the number of samples $N_{\text{sample}}$ 
for each physical error rate $p$, which ensures a sufficient number of trials with $\sigma_{p_L}$ below $10\%$ of corresponding logical failure rate.
The pseudo-threshold is defined as the physical error rate $p_{\text{th}}$,
such that $p_{\text{L}}(p_{\text{th}}) = P(p_{\text{th}},k)$~\cite{bravyi2024high},
where $P(p,k)=1-(1-p)^k$ represents the probability that at least one 
of $k$ physical fermion sites experiences an error.

\subsubsection{Detector error model}
With syndromes obtained by stabilizer measurements and transversal measurement
results, we use a decoder to identify the errors that have occurred~\cite{gidney2021stim}.
In addition to the measurement data, it is necessary to provide the decoder
with a detector error model, which explains how errors activate the information 
captured by detectors.
A detector error model is a data structure that encodes the structural 
information of noisy quantum circuits, including the spatiotemporal probability 
distribution of errors and their correlation effects on syndromes and logical 
operators, into a graph representation that can be parsed by decoders~\cite{gidney2021stim}.
The decoder relies on this data format to understand the circuit structure accurately.

To enhance readability, we use the following text format to describe a detector error model:
\begin{equation}
\begin{cases}
\text{Error 1}(p_1)\quad {D}_1, {D}_2, \dots \\
\text{Error 2}(p_2)\quad {D}_2, {L}_1, \dots \\
\dots,
\end{cases}
\end{equation}
where each line corresponds to an independent error event occurring with the probability
specified by the symbol in the bracket, such as $p_1$ or $p_2$. 
Detectors and logical observables triggered by the event are listed on the right-hand side,
where $D_j$ and $L_j$ denote detectors and logical observables, respectively. 
A detector refers to the parity of a set of measurement outcomes in the circuit.
For example,
in Fig.~\ref{DetectorErrorModel}(a), the measurement outcome in the $Z$ basis 
constitutes a detector for a qubit state initialized in $|0\rangle$. 
An $X$ error---an error event---can trigger the detector to yield $-1$. 
Assuming that the $X$ error is the sole trigger, if the detector reads $-1$, 
the error's occurrence is confirmed. This detector cannot identify a $Z$ error,
as it does not trigger it (see Figure~\ref{DetectorErrorModel}(b)). 
In general circuits, attention is required only if measurement outcomes expected to 
be consistent show inconsistency.

\begin{figure}[t]
\centering
\includegraphics[width=0.8\linewidth]{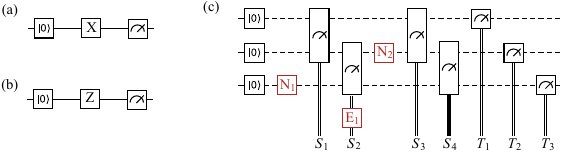}
\caption{(a) For the circuit, a detector is a 
	measurement outcome in the $Z$ basis, and an $X$ error can trigger
	the detector, i.e., flip the value of the detector to $-1$.
	(b) A $Z$ error cannot trigger the detector.
	(c) We use a Majorana repetition code consisting of three fermion 
	sites for demonstration of a detector error model.
	Two noises $N_1$ and $N_2$ and one measurement error $E_1$ 
	are considered. $S_1$ and $S_3$ ($S_2$ and $S_4$) are the measurement outcomes
	of the stabilizer $-\gamma_1\gamma_1'\gamma_2\gamma_2'$ ($-\gamma_2\gamma_2'\gamma_3\gamma_3'$).
	$T_j$ with $j=1,2,3$ denotes the measurement outcome of the particle number
	operator at the $j$th physical fermion site.
	}
\label{DetectorErrorModel}
\end{figure}

Next, we use a test circuit of a Majorana repetition code consisting of three fermion sites as 
an example to demonstrate how to derive the detector error model from a fermionic circuit.
The code's stabilizer generators are $-\gamma_1\gamma_1'\gamma_2\gamma_2'$ 
and $-\gamma_2\gamma_2'\gamma_3\gamma_3'$.
For simplicity, we consider three errors: a $\gamma_3$ error denoted as $N_1$, 
a measurement error denoted as $E_1$, and a depolarizing error 
denoted as $N_2$, as illustrated in Fig.~\ref{DetectorErrorModel}(c).
Under ideal, noiseless conditions, the two stabilizers always remain 
at $+1$, and the values of consecutive syndrome measurements do not vary.
However, an error can cause a change detectable by detectors.
We therefore consider the following detectors: 
$D_1=S_1$, $D_2=S_1 S_3$, $D_3=S_3 T_1 T_2$, $D_4=S_2$,
$D_5=S_2 S_4$, and $D_6=S_4 T_2 T_3$, where $S_j$ with $j=1,2,3,4$ 
denote the syndrome extraction values and $T_j$ with $j=1,2,3$ denote the transversal 
measurement outcomes in the computational basis.
The logical observable is defined by $L_1=T_1 T_2 T_3$,
which should ideally be $+1$ under noiseless conditions.

Deriving the detector error model requires us to analyze the 
events triggered by each independent error occurrence.
When $N_1$ occurs, the stabilizers that anticommute with 
$\gamma_3$ flip sign, causing $S_2$ to flip sign, which 
triggers detector $D_4$ and the logical observable of $L_1$.
When $E_1$ occurs, detectors $D_4$ and $D_5$ are triggered.
$N_2$ corresponds to a depolarizing noise described by the Kraus 
operators of $\{\sqrt{1-p} I,\sqrt\frac{p}{3} \gamma_2,\sqrt\frac{p}{3} \gamma'_2,\sqrt\frac{p}{3}\ii \gamma_2\gamma'_2\}$.
Such an error cannot be described by a single independent event, since
the occurrence of the error of $\gamma_2$ indicates that other errors do
not happen. We therefore need to express this noise channel in terms of
three independent error channels: $\{\sqrt{1-p'} I,\sqrt{p'}\gamma_2\},\{\sqrt{1-p'} I,\sqrt{p'}\gamma'_2\},\{\sqrt{1-p'} I,\sqrt{p'}\ii\gamma_2\gamma'_2\}$.
In the Kraus-operator formulation, it can be shown that the two are equivalent when $p'=\frac{1}{2}(1-\sqrt{1-\frac{4}{3}p})$.
Thus, we decompose the channel of $N_2$ into three independent channels, 
representing errors of $\gamma_2$, $\gamma'_2$, and $\ii \gamma_2 \gamma_2'$, 
respectively.
Similarly, a two-site depolarizing noise can be decomposed into 
$15$ independent two-term Kraus channels with each error occurring with 
the probability of $\frac{1}{2} \left( 1 - \left(1 - \frac{16}{15}p\right)^{1/8} \right)\approx\frac{p}{15}$.
The error of either $\gamma_2$ or $\gamma'_2$ triggers 
$D_2$, $D_5$, and $L_1$, while the error of $\ii \gamma_2 \gamma_2'$
triggers no events and thus can be ignored.
We denote the occurrence probabilities of the four errors including 
$N_1$, $E_1$ and the two independent errors caused by $N_2$
as $p_1$, $p_2$, $p_3$, and $p_4$, respectively (note that $p_3=p_4$).
The resulting detector error model is as follows:
\begin{equation}
\begin{cases}
\text{Error 1}(p_1)\quad\text{D}_4,\text{L}_1\\
\text{Error 2}(p_2)\quad\text{D}_4,\text{D}_5\\
\text{Error 3}(p_3)\quad\text{D}_2,\text{D}_5,\text{L}_1\\
\text{Error 4}(p_3)\quad\text{D}_2,\text{D}_5,\text{L}_1.
\end{cases}
\end{equation}

To efficiently compute the detector error model for fermionic circuits, 
we extend the concept of the Pauli frame to the Majorana frame. 
Currently, most decoders can decode based on the detector error model 
without distinguishing the specific physical implementation of the underlying circuit. 
This enables the use of decoders for fermionic circuits similarly to their use 
for qubit circuits. We have incorporated the numerical method for 
generating detector error models for fermionic circuits into the Python package~\cite{gidney2021stim}.

\subsubsection{Decoder setups}
In this work, we use two independent decoders: 
the BP+OSD algorithm~\cite{roffle2022ldpc},
a most commonly used decoder for quantum LDPC codes, and 
the Tesseract decoder based on $A^*$ search and beam search strategies~\cite{beni2025tesseractdecoder}.

For the former, we employ its Python package stimbposd implementation~\cite{roffle2022ldpc}.
The core workflow of the BP-OSD algorithm is summarized as follows:
It first converts a detector error model into
a Tanner graph, an undirected graph, where vertices represent errors, 
detectors, or logical observables and each line in the detector error model 
corresponds to several edges with each edge indicating that an error triggers
a detector or logical observable.
After this, the BP algorithm is used to update the marginal probabilities 
iteratively of each error through $N_{\text{BP}}$ iterations in total.
This BP process provides the likelihood of each error occurring.
Based on the results of the BP process,
the OSD process sorts all errors in descending order of likelihood 
and selects the top $k$ errors as highly probable occurrences.
Furthermore, the algorithm selects multiple flips of some errors among $w$ errors,
 i.e., assuming that they did not occur, and uses Gaussian elimination to solve 
which remaining errors need to occur to satisfy the detector occurrences.
Here, $w$ is called the OSD order.
Finally, it calculates the probabilities of various solutions, select the one with 
the highest probability as the error pattern, and analyze whether logical observables 
flip under this error pattern.
In our work, the configuration parameters for the BP+OSD decoder are set 
as follows: the number of BP iterations $N_{\text{BP}} = 100$, the OSD candidate 
set size $k = 200$, and the OSD search order $w = 8$ (sweeping mode).

The Tesseract decoder is a maximum likelihood estimation (MLE) decoder 
developed by Google Quantum AI~\cite{beni2025tesseractdecoder}, and
we employ its open-source implementation as a benchmark for comparison.
This decoder introduces heuristic guidance, directing the search 
direction through a precomputed heuristic function, prioritizing 
exploration of high-probability error patterns.
Additionally, the decoder employs beam pruning strategies, retaining 
only the top $k_{\text{beam}}$ candidate error patterns at each search level, 
balancing accuracy and efficiency.
The parameters used in this work when calling this decoder include 
beam width $k_{\text{beam}} = 60$.

\subsection{B. Fermionic circuit simulation}

We now provide the details of Fig.~4 in the main text. 
There, we choose the fermionic $[[88,4,7]]_{\text{f}}$ double-chain bicycle code as 
the fermionic memory and use the four logical modes for simulation 
and two fermionic $[[37,1,7]]_{\text{f}}$ color codes as the processors.
The memory is initialized in the $|\overline{0}\overline{1}\overline{0}\overline{1}\rangle$ state.
We then perform multiple rounds of the circuit shown in Fig.~4(a), where each round represents a time step.
To implement the tunneling gate or fSWAP gate between logical modes, we transfer 
the corresponding logical modes from the memory to the processors via fermionic lattice surgery, 
perform the gate on the processors, and then transfer the logical modes back to the memory.
The processors are then reset to $|\overline{0}\overline{0}\rangle$ state, 
preparing them for the next tunneling or fSWAP gate.
We apply the same circuit-level error model as described previously, 
with a physical error rate of $p = 10^{-4}$.
We execute the circuit with and without error correction.
For simplicity, we assume that onsite measurements in both Fig.~3(a) and Fig.~4(a) are 
noiseless.
For all other cases, we account for circuit-level errors with $d$ cycles of syndrome 
extraction followed by error correction.
The decoder used here is the Tesseract decoder~\cite{beni2025tesseractdecoder}.
We track the expectation value of logical particle numbers at sites 1 and 4 
at each time step to compare the simulation results of both scenarios.

In the main text, we employ the first method to perform fermionic lattice surgery.
Here, in Fig.~\ref{figs-simulation}, we provide the simulation results achieved 
using the second method for lattice surgery.

\begin{figure}[htbp]
	\centering
	\includegraphics[width=0.35\textwidth]{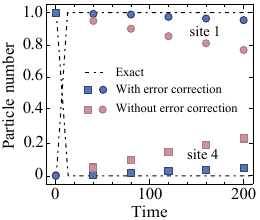}
	\caption{The time evolution of logical particle numbers at sites 1 and 4, which are the same as in Fig.~4(b) except that the lattice surgery is performed using the second method.
	The error bars are hidden behind the symbols.}
	\label{figs-simulation}
\end{figure}

\end{widetext}
\end{document}